\DeclareFixedFont{\MyTitleFont}{OT1}{ptm}{m}{n}{20pt}
\DeclareFixedFont{\MyAuthorFont}{OT1}{ptm}{m}{n}{13pt}
\DeclareFixedFont{\MyAbstractTitleFont}{OT1}{ptm}{m}{it}{12pt}
\DeclareFixedFont{\MyAbstractFont}{OT1}{ptm}{m}{it}{11pt}
\DeclareFixedFont{\MySubtitleFont}{OT1}{ptm}{m}{n}{15pt}
\DeclareFixedFont{\MySubSubtitleFont}{OT1}{ptm}{m}{n}{13pt}
\DeclareFixedFont{\MySubSubSubtitleFont}{OT1}{ptm}{m}{n}{11pt}
\DeclareFixedFont{\MyTextFont}{OT1}{ptm}{m}{n}{11pt}
\newcommand{\neweq}[1]{\stackrel{\smash{\footnotesize\mathrm{#1}}}{=}}
\newcommand{\newleq}[1]{\stackrel{\smash{\footnotesize\mathrm{#1}}}{\leq}}
\title{\MyTitleFont Sensitivity Analysis of Continuous-Time Linear Control Systems subject to Control and Measurement Noise: An Information-Theoretic Approach \vspace{0em}}
\author{\MyAuthorFont Neng Wan$^1$, Dapeng Li$^2$, and Naira Hovakimyan$^1$}
\date{}
\newtheorem{theorem}{Theorem}
\newtheorem{remark}{Remark}
\newtheorem{lemma}{Lemma}
\newtheorem{property}{Property}
\newtheorem{definition}{Definition}
\newtheorem{corollary}[theorem]{Corollary}
\newenvironment{manualtheorem}[1]{%
	\manualtheoreminner
}{\endmanualtheoreminner}
\begin{document}

\maketitle


\footnotetext[0]{$^{1}$Neng Wan and Naira Hovakimyan are with the Department of Mechanical Science and Engineering, University of Illinois at Urbana-Champaign, Urbana, IL 61801, USA. {\tt\small \{nengwan2, nhovakim\}@illinois.edu}.}
\footnotetext[0]{$^{2}$Dapeng Li is a Principal Scientist with the JD.com Silicon Valley Research Center, Santa Clara, CA 95054, USA. {\tt\small dapeng.li@jd.com}.}
\vspace{-4em}

\begin{abstract}
{\vspace{0.5em}
	Sensitivity of linear continuous-time control systems, subject to control and measurement noise, is analyzed by deriving the lower bounds of Bode-like integrals via an information-theoretic approach. Bode integrals of four different sensitivity-like functions are employed to gauge the control trade-offs. When the signals of the control system are stationary Gaussian, these four different Bode-like integrals can be represented as differences between mutual information rates. These mutual information rates and hence the corresponding Bode-like integrals are proven to be bounded below by the unstable poles and zeros of the plant model, if the signals of the control system are wide-sense stationary.
}
\end{abstract}
\vspace{-0.5em}

\titleformat*{\section}{\centering\MySubtitleFont}
\titlespacing*{\section}{0em}{1.25em}{1.25em}[0em]

\section{Introduction}\label{sec1}

Stabilization of systems subject to external disturbances and achieving desired level of performance have been the objective of  feedback synthesis since its inception \cite{Bode_1945, Francis_TAC_1984, Zhou_1998, Hovakimyan_2012, Astrom_2013, Shtessel_2014}. With the visible progress of information technologies and their applications to feedback control systems over the last two decades, a great deal of attention has been given to understanding the fundamental limitations of closed-loop systems in the presence of communication channels, \cite{Zang_SCL_2003, Martin_TAC_2008, Ishii_SCL_2011, Li_TAC_2013, Fang_2017}. The main contribution of these papers was to explore the performance limitations of stochastic systems in the presence of limited information. While~\cite{Iglesias_LAA_2002, Zang_SCL_2003, Martin_TAC_2008, Ishii_SCL_2011} investigated the Bode-like integrals for discrete-time systems by using  Kolmogorov's entropy-rate equality~\cite{Cover_2012}, the results in~\cite{Li_TAC_2013} put forward an approach to explore the continuous-time systems by resorting to mutual information rates. In the aforementioned papers, sensitivity-like functions were introduced to define the Bode-like integrals, which can be regarded as a generalization of the classical Bode integrals from the deterministic LTI systems to stochastic nonlinear systems. For deterministic LTI systems, previous results based on complex analysis have shown that the lower bounds of the Bode integrals are determined by the unstable zeros and poles of the plant model~\cite{Bode_1945, Freudenberg_1985, Freudenberg_1987, Sung_IJC_1988, Middleton_1991}. Seminal results on this topic were also reported in~\cite{Middleton_2010, Seron_2012, Li_Auto_2013, Zhao_Auto_2014, Yu_IJRNC_2015}.

Performance limitations of stochastic systems in the presence of limited information were analyzed through the sensitivity-like functions~\cite{Martin_TAC_2007, Martin_TAC_2008, Fang_TAC_2017, Li_TAC_2013, Okano_Auto_2009, Ishii_SCL_2011}, which are defined by the power spectral densities (PSDs) of signals. Taking an information-theoretic approach was the key to get Bode integrals extended to stochastic nonlinear systems. Unlike the frequency-domain approach, which explicitly depends on the input-output relationship of the feedback system (transfer function), the focus of the information-theoretic approach is on the signals. The lower bound for sensitivity Bode-like integral in stochastic continuous-time systems was first reported in~\cite{Li_TAC_2013}: $\frac{1}{2\pi} \int_{-\infty}^{\infty} \log |T_{uw}(\omega)| d\omega \geq \sum_{\lambda \in \mathcal{UP}} p_i$. This result can be applied to  systems with nonlinear controllers, which is an improvement upon the prior results based on the frequency-domain approaches \cite{Bode_1945, Freudenberg_1985, Freudenberg_1987, Middleton_1991, Sung_IJC_1989, Seron_2012, Zhou_1998, Sung_IJC_1988}. Nevertheless, to the best of the authors' knowledge i) the lower bound of complementary sensitivity Bode-like integral in stochastic systems and ii) the lower bounds of load disturbance sensitivity and noise sensitivity Bode-like integrals in both deterministic and stochastic systems have rarely been investigated. Unboundedness of these sensitivity-like functions in high frequencies, as well as the challenges in information-theoretic representations of weighted Bode-like integrals have been the main obstacles on this path.

In this paper, a comprehensive sensitivity analysis of stochastic continuous-time systems, subject to control and measurement noise, is pursued via an information-theoretic approach. The continuous-time Bode-like integrals of sensitivity, complementary sensitivity, load disturbance sensitivity, and noise sensitivity are defined, where the Bode-like integrals for the latter two sensitivity functions were seldom studied before. Because Kolmogorov's entropy-rate equality, which was widely employed to derive the lower bounds of Bode-like integrals in discrete-time systems~\cite{Martin_TAC_2008, Ishii_SCL_2011, Fang_TAC_2017}, is not applicable to continuous-time systems, we resort to a seminal lemma on mutual information rates~\cite[p.~181]{Pinsker_1964} to seek  the lower bounds in continuous-time systems. With this lemma, we prove that the Bode-like integrals can be bounded below by the mutual information rates of signals, when the signals are stationary Gaussian. Furthermore, in the case when the signals of the system are wide-sense stationary, the lower bounds of these mutual information rates, and hence the lower bounds of Bode-like integrals, are determined by the unstable poles and zeros of the plant model. We also provide the relationship between Bode integrals and Bode-like integrals of continuous-time systems, which complements the previous investigations on discrete-time systems~\cite{Martin_TAC_2008, Ishii_SCL_2011}. Some early results in this direction were reported in~\cite{Wan_CDC_2018}.

The paper is organized as follows: \hyperref[sec2]{Section~2} introduces preliminary results and defines four different Bode-like integrals that gauge the performance limitations of continuous-time feedback systems; \hyperref[sec3]{Section~3} derives the lower bounds of Bode-like integrals in terms of mutual information rates; \hyperref[sec4]{Section~4} shows that these lower bounds are further bounded below by the unstable poles and zeros of the plant model; and \hyperref[sec5]{Section~5} concludes the paper.

\textit{Notations}. The notations used throughout this paper are defined as follows. In order to align with the notations used in control systems, $x(t)$ represents a continuous-time stochastic process with $x_{t_1}^{t_2}$ indicating a sample path on an interval $[t_1, t_2] \subset \mathbb{R}^{+}$ and $x^t := x_0^t$; $x(k)$ denotes a discrete-time stochastic process with $x_m^n$ indicating the segment $\{ x(k) \}_{k=m}^n$, $m<n \in \mathbb{N}$ and $x_0^n := x^n$; and $x^{(\delta)}$ denotes the discrete-time process obtained from sampling of $x(t)$ with an interval $\delta > 0$ with $x_i^{(\delta)} = x^{(\delta)}(i) := x(t_0 + i\delta), i \in \mathbb{N}$. We also use the information-theoretic notations: $\mathbb{E}[\cdot]$ for expectation, $h({x})$ for Shannon differential entropy, $I(\cdot; \cdot)$ for mutual information, and $I_\infty(\cdot; \cdot)$ for mutual information rate. The logarithmic function $\log(\cdot)$ in this paper assume the basis $\mathrm{e}$ by default. For a matrix $M$, $|M|$ denotes its determinant.

\section{Preliminaries and Problem Formulation}\label{sec2}

Consider the general continuous-time feedback system shown in \hyperref[fig1]{Figure~1},

\begin{figure}[H]\label{fig1}
	\centering \vspace{-2em}
	\includegraphics[width=0.45\textwidth]{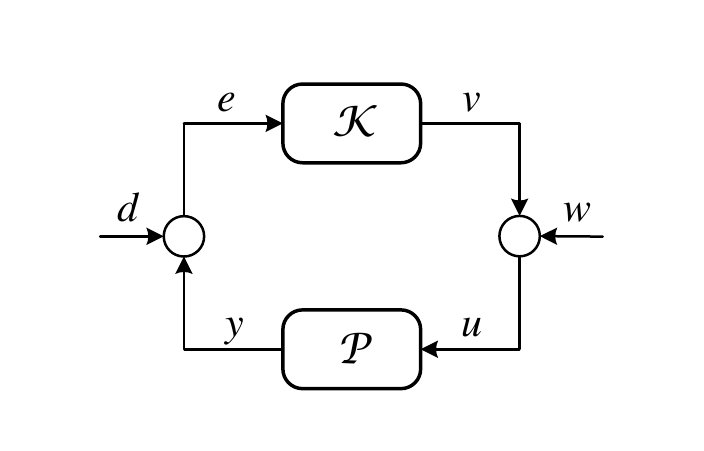} \vspace{-2em}
	\caption{\small General feedback system.}
\end{figure}

\noindent where $\mathcal{P}$ is the plant model, $\mathcal{K}$ denotes the causal feedback control mapping, and $d$ and $w$ respectively represent the noise over measurement and control channels. In classical control theory, with zero initial condition, the linear plant $\mathcal{P}$ can be described by the following transfer function
\begin{equation}\label{original_tf}
G(s) = c \cdot \dfrac{\prod_{i=1}^{m}(s-z_i)}{\prod_{i=1}^n(s-p_i)},
\end{equation}
where $c\in\mathbb{R}$, $m \leq n$ so that the system is causal, and $z_i$ and $p_i$ denote the zeros and poles of plant $\mathcal{P}$, respectively. When the control mapping $\mathcal{K}$ is linear, we use $C(s)$ to denote its transfer function, which has a similar form as~\eqref{original_tf}. In this scenario, the four important transfer functions for sensitivity analysis are given by
\begin{equation}\label{SF1}
	\begin{split}
		T_{uw}(s)  = \dfrac{1}{1+G(s)C(s)}, & \qquad T_{yw}(s)  = \dfrac{G(s)}{1+G(s)C(s)},\\
		T_{ud}(s)  = \dfrac{C(s)}{1+G(s)C(s)},& \qquad T_{yd}(s)  = \dfrac{G(s)C(s)}{1+G(s)C(s)},
	\end{split}
\end{equation}
where $T_{uw}(s), T_{yw}(s), T_{ud}(s)$ and $T_{yd}(s)$ respectively denote the sensitivity, load disturbance sensitivity, noise sensitivity, and complementary sensitivity function and are referred to as {\em Gang of Four} in~\cite{Astrom_2010}. For brevity, we also use the notation $L(s) = G(s) C(s)$ in the following context. The integral of sensitivity function $T_{uw}(s)$ over all frequencies, $(2\pi)^{-1} \cdot \int_{-\infty}^{\infty} \log \left|T_{uw}(s)\right| d\omega$, is known as the Bode's integral~\cite{Bode_1945, Seron_2012}, which is a critical index characterizing the performance limitations of feedback systems subject to noise. However, it seems that simply replacing the $T_{uw}(s)$ function in Bode integral with the other three sensitivity functions cannot give us a new trade-off as significant as the sensitivity Bode integral, due to the unboundedness of integrands $\log |T_{yw}(s)|$, $\log |T_{ud}(s)|$, and $\log |T_{yd}(s)|$ when $s \rightarrow \infty$ and the absence of a relationship between $T_{yw}(s)$ and $T_{ud}(s)$ that is similar to $T_{uw}(s) + T_{yd}(s) = 1$. A feasible solution dealing with the unboundedness of integrals is to multiply the logarithmic integrands by a weighting factor, such as $1/\omega^2$~\cite{Middleton_1991} or a Poisson-type kernel function~\cite{Seron_2012}.

In stochastic setting, the linear plant model $\mathcal{P}$ in \hyperref[fig1]{Figure~1} can be described by the following state-space model
\begin{equation}\label{original_sp}
	\begin{cases}
	\dot{x} = Ax + Bu,\\
	y  = Cx,
	\end{cases}
\end{equation}
where $x$ is the state vector, $u$ and $y$ are the input and output of $\mathcal{P}$. Throughout this paper, we assume that the initial values of the state vector are unknown, but have finite entropy. The measurement noise $d$ and control noise $w$  in~\hyperref[fig1]{Figure~1} are assumed to be mutually independent and zero-mean Gaussian. A comprehensive discussion on  different initial conditions for deterministic systems and stochastic systems is available in~\cite{Ishii_SCL_2011}. Derived from the sensitivity functions in~\eqref{SF1}, the following sensitivity-like functions in terms of PSDs were adopted by later researchers when analyzing the sensitivity properties of stochastic nonlinear systems via information-theoretic methods~\cite{Martin_TAC_2007, Ishii_SCL_2011, Li_TAC_2013}:
\begin{equation}\label{SF2}
	\begin{split}
	T_{uw}(\omega)  = \sqrt{\dfrac{\phi_u(\omega)}{\phi_w(\omega)}}, & \qquad T_{yw}(\omega)  = \sqrt{\dfrac{\phi_y(\omega)}{\phi_w(\omega)}},\\
	T_{ud}(\omega)  = \sqrt{\dfrac{\phi_u(\omega)}{\phi_d(\omega)}},& \qquad T_{yd}(\omega)  = \sqrt{\dfrac{\phi_y(\omega)}{\phi_d(\omega)}},
	\end{split}
\end{equation}
\noindent where each pair of signals is stationary and stationary correlated, $\phi_x(\omega)$ denotes the PSD of stationary signal $x(t)$ with
\begin{equation}\label{PSD}
	\phi_x(\omega) = \int_{-\infty}^{\infty} r_x(\tau) \cdot  {\rm e}^{-j\omega\tau} d\tau,
\end{equation}
and $r_x(\tau) = r_{xx}(t + \tau,t)$ is the auto-covariance of signal $x$ with $r_{xy}(v,t) = \textrm{Cov}[x(v), y(t)]$. Inspired by the weighting factor used for the complementary sensitivity Bode integral in~\cite{Middleton_1991}, in this paper we are interested in seeking the lower bounds for the Bode-like integrals defined as follows
\begin{equation}\label{Bode-Like}
	\begin{split}
	\dfrac{1}{2\pi}\int_{-\infty}^{\infty} \log T_{uw}(\omega) \ d\omega,  & \qquad \dfrac{1}{2\pi}\int_{-\infty}^{\infty} \log T_{yw}(\omega) \ {d\omega},\\
	\dfrac{1}{2\pi}\int_{-\infty}^{\infty} \log T_{ud}(\omega) \ \dfrac{d\omega}{\omega^2},  & \qquad \dfrac{1}{2\pi}\int_{-\infty}^{\infty} \log T_{yd}(\omega) \ \dfrac{d\omega}{\omega^2}.\\
	\end{split}
\end{equation}
\begin{remark}
	In this paper, the integrals on sensitivity functions defined in~\eqref{SF1} are referred to as Bode integrals, while the integrals on sensitivity-like functions defined in~\eqref{SF2} are named as Bode-like integrals. As~\hyperref[lem1]{Lemma~1} in this paper will reveal, the Bode integrals and the reciprocal Bode-like integrals are equal when the signals of system are stationary Gaussian. Hence, according to~\cite{Middleton_1991}, the weighting function $1/\omega^2$ should also guarantee the boundedness of the Bode-like integral of $T_{yd}$. Meanwhile, the forms of the Bode-like integrals on $T_{yw}$ and $T_{ud}$ are respectively inherited from the definitions of the sensitivity Bode integral on $T_{uw}$ and the complementary sensitivity integral on $T_{yd}$. This allows to preserve some good properties and also bring convenience when deriving the lower bounds. However, the absence of a similar identity as $T_{uw} + T_{yd} = 1$ between $T_{yw}$ and $T_{ud}$ and the involvement of weighting function $1 / \omega^2$ generate additional challenges and constraints when deriving the lower bounds via information-theoretic approaches.
\end{remark}

Before we proceed to discuss the lower bounds of Bode-like integrals, some basic definitions, properties from information theory, and a lemma on the relationship between Bode integrals and Bode-like integrals are given in the remaining part of this section. Motivated by the frequency inversion adopted in~\cite{Middleton_1991}, consider the following frequency transformation
\begin{equation}\label{freq_trans}
	\tilde{s} = j\tilde{\omega} = (j\omega)^{-1} = s^{-1},
\end{equation}
where the frequencies satisfy $\tilde{\omega} = -(\omega)^{-1}$. Applying the frequency inversion~\eqref{freq_trans} to the plant model $G(s)$ in~\eqref{original_tf}, we refer to the resulting system as the {\em auxiliary system} and write its transfer function as
\begin{equation}\label{tf_auxiliary}
G(s) = G(\tilde{s}^{\ -1}) = c \cdot \tilde{s}^{n-m} \cdot \dfrac{\prod_{i=1}^m (1 - \tilde{s} \cdot z_i)}{\prod_{i=1}^{n} (1 - \tilde{s} \cdot p_i)} = \tilde{G}(\tilde{s}).
\end{equation}
The Laplace transform of the signal $\tilde{x}$ in the auxiliary system satisfies $X(s) = X(\tilde{s}^{\ -1}) = \int_{-\infty}^{\infty} x(\tau) \cdot \mathrm{e}^{-\tau \cdot \tilde{s}^{ \ -1}} d\tau = \tilde{X}(\tilde{s})$, where $x$ can be replaced by any signal shown in~\hyperref[fig1]{Figure~1}.
When the control mapping $\mathcal{K}$ is linear, we use the notation $\tilde L( \tilde s) = \tilde G(\tilde s) \cdot \tilde C(\tilde s)$, with $\tilde C(\tilde s)$ being the frequency inversion of $C(s)$. No direct or intermediate result in this paper will be derived from the auxiliary system, and the transfer function~\eqref{tf_auxiliary} is not causal with respect to $\tilde{s}$ when  $n - m > 0$.

 By swapping the input and output of the auxiliary system, the resulted system is referred to as the inverse system $(\tilde{\mathcal{P}}^{-1}, \tilde{\mathcal{K}}^{-1})$, which has the plant with the following transfer function
\begin{equation}\label{tf_inv_auxiliary}
{\tilde{G}}^{-1}(\tilde{s}) = \dfrac{1}{c } \cdot \dfrac{ \prod_{i=1}^{n} (1 - \tilde{s} \cdot p_i)}{  \tilde{s}^{n-m} \cdot \prod_{i=1}^m (1 - \tilde{s} \cdot z_i)} = \dfrac{\prod_{i=1}^{n}(-p_i)}{c \cdot \prod_{i=1}^{m}(-z_i)} \cdot \dfrac{\prod_{i=1}^{n} (\tilde{s} - p_i^{-1})}{\tilde{s}^{n-m} \cdot \prod_{i=1}^{m}(\tilde{s} - z_i^{-1})}.
\end{equation}
When $p_i = 0$, the corresponding term $\tilde{s} - p_i^{-1}$ vanishes in the numerator. Hence, the plant model~\eqref{tf_inv_auxiliary} is proper with respect to $\tilde{s}$. A minimal realization of this transfer function is described by
\begin{equation}
\begin{cases}
\dot{\hat x} = \hat{A}\hat{x} + \hat B \tilde{v},\\
\tilde{e}  = \hat C \hat{x}.
\end{cases}
\end{equation}
The block diagram of the inverse system is shown in~\hyperref[fig2]{Figure~2}.
\begin{figure}[H]\label{fig2}
	\centering \vspace{-2em}
	\includegraphics[width=0.5\textwidth]{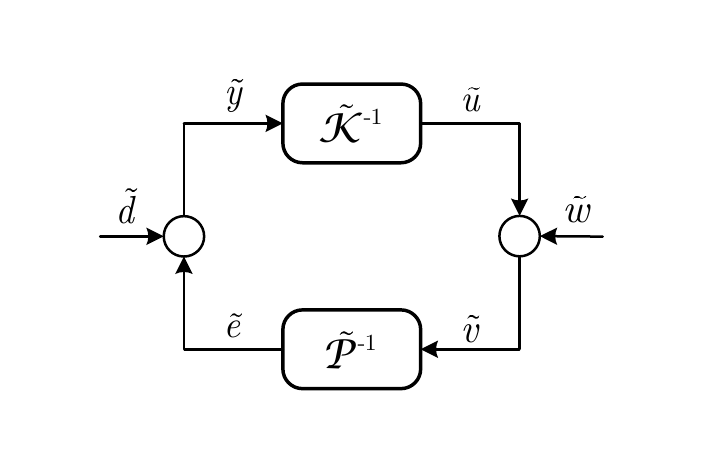} \vspace{-2em}
	\caption{\small Inverse system.}
\end{figure}

Some required definitions and  preliminary results are stated below, while more  information on these topics can be found in~\cite{Papoulis_2002, Gamal_2011, Cover_2012, Astrom_2012, Li_TAC_2013, Fang_2017}.
\begin{definition}\textnormal{({\textbf{Wide Sense Stationary}})}
	A second order random process $x(t)$ is called wide sense stationary, if $\mathbb{E}[x(t)] = \mathbb{E}[x(t+v)]$ and $\mathrm{Cov}[x(t), x(t + \tau)] = \mathrm{Cov}[x(v), x(v+\tau)]$.
\end{definition}

\begin{definition}\textnormal{({\textbf{Mean-Square Stability})}}
	A closed loop system is said to be mean-square stable, if the state $x(t)$ satisfies $\sup_{t \geq 0}\mathbb{E}\left[x^\mathrm{T}(t)x(t)\right] < \infty$.
\end{definition}

\begin{definition}\label{def3} \textnormal{({\textbf{Class $\mathbb{F}$ Function:  \textnormal{{See}\cite{Li_TAC_2013} {{or}} \cite[p.~182]{Pinsker_1964}}}})}  We define class $\mathbb{F}$  function in the following way: $\mathbb{F} = \{l:l(\omega) = p(\omega)(1-\varphi(\omega)), l(\omega) \in \mathbb{C}, \omega \in \mathbb{R} \}$, where $p(\cdot)$ is rational and $\varphi(\cdot)$ is a measurable function, such that $0 \leq \phi \leq 1$ for all $\omega \in \mathbb{R}$ and $\int_{\mathbb{R}}|\log(1-\varphi(\omega))| d\omega < \infty$.
\end{definition}

\begin{property}\label{ppt0}
	\textnormal{({See \cite{Li_TAC_2013}})} For two continuous-time stochastic processes $x$ and $y$, the mutual information between $x_{t_1}^{t_2}$ and $y_{t_1}^{t_2}$, $0 \leq t_1 < t_2 < \infty$, can be obtained as $I( x_{t_1}^{t_2}; y_{t_1}^{t_2}) = \lim_{n \rightarrow \infty}I(x_0^{(\delta(n))}, \cdots, x_n^{(\delta(n))}; \allowbreak y_0^{(\delta(n))}, \cdots, y_n^{(\delta(n))})$ for any fixed $t_1$ and $t_2$ with $x_i^{(\delta(n))} = x(t_1 + i \delta(n))$ and $\delta(n) = (t_2 - t_1) / (n+1)$.
\end{property}

\begin{property}\label{ppt1}
	\textnormal{({See \cite{Gamal_2011, Cover_2012, Fang_2017}})} For a pair of random variables ${x}$ and ${y}$, we have $h(x|y) = h(x + g(y)|y)$ and $I({x};{y}) = h({x}) - h({x}|{y}) = h({y}) - h({y}|{x}) = I(y; x)$, where $g(\cdot)$ is a measurable function.
\end{property}

\begin{property}\label{ppt2}
	\textnormal{(See \cite[p.~660]{Papoulis_2002})} For continuous random variables, $x_1, \cdots, x_n$, if the transformation $y_i = g_i(x_1, \cdots, x_n)$ has a unique inverse, then $h(y_1, \cdots, y_n) = h(x_1, \cdots, x_n) + \mathbb{E}\left[\log \left|J(x_1, \cdots, x_n)\right|\right]$, where $J(x_1, \cdots, x_n)$ is the Jacobian matrix of the above transformation.
\end{property}

\begin{property}\label{ppt3}
	\textnormal{(\textbf{Maximum Entropy}: See~\cite{Gamal_2011, Li_TAC_2013})} For a random vector $x \in \mathbb{R}^n$ with covariance matrix $\Sigma_x$, we have $h(x) \leq h(x_G) = 1 / 2 \cdot \log((2\pi e)^n \cdot |\Sigma_x|)$, where $x_G$ is Gaussian with the same covariance as $x$.
\end{property}

\begin{property}\label{ppt4}
	\textnormal{(See \cite{Li_TAC_2013} or \cite[p.~181]{Pinsker_1964})} Suppose that two one-dimensional continuous-time random processes $x(t)$ and $y(t)$ form a stationary Gaussian process $(x, y)$, and $\phi_x$ and $\phi_y$ belong to  class $\mathbb{F}$. Then, $I_\infty(x; y) = -(4\pi)^{-1}\int_{-\infty}^{\infty}\log\{1 - |\phi_{xy}(\omega)|^2/ [\phi_x(\omega)\phi_y(\omega)]\} d\omega$.
\end{property}


The following lemma establishes a relationship between the Bode integrals defined by the transfer functions and the Bode-like integrals defined by the PSDs of signals.
\begin{lemma}\label{lem1}
	\textit{When the plant model $\mathcal{P}$ and the controller $\mathcal{K}$ are linear, and the control noise $w(t)$ is wide sense stationary, the Bode integrals and the Bode-like integrals satisfy}
	\begin{subequations}
		\begin{equation}\label{lem1a}
		\dfrac{1}{2\pi}\int_{-\infty}^{\infty} \log {T}_{uw}(\omega) d\omega = \dfrac{1}{2\pi}\int_{-\infty}^{\infty} \log |T_{uw}(s)|  d\omega,
		\end{equation}
		\begin{equation}\label{lem1b}
		\frac{1}{2\pi} \int_{-\infty}^{\infty} \log{T}_{yw}(\omega)  {d\omega} = \frac{1}{2\pi} \int_{-\infty}^{\infty} \log |T_{yw}(s)|  {d\omega}.
		\end{equation}
		\textrm{When the measurement noise $d(t)$ is wide sense stationary, the Bode integrals and the Bode-like integrals satisfy}
		\begin{equation}\label{lem1c}
		\frac{1}{2\pi} \int_{-\infty}^{\infty} \log{T}_{ud}(\omega)  \frac{d\omega}{\omega^2} = \frac{1}{2\pi} \int_{-\infty}^{\infty} \log |T_{ud}(s)|  \dfrac{d\omega}{\omega^2},
		\end{equation}
		\begin{equation}\label{lem1d}
		\frac{1}{2\pi} \int_{-\infty}^{\infty} \log{T}_{yd}(\omega)  \frac{d\omega}{\omega^2} = \frac{1}{2\pi} \int_{-\infty}^{\infty} \log |T_{yd}(s)|  \dfrac{d\omega}{\omega^2}.
		\end{equation}
	\end{subequations}
\end{lemma}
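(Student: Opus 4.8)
The plan is to reduce each of~\eqref{lem1a}--\eqref{lem1d} to a single pointwise identity: that the PSD-based sensitivity-like function agrees at every frequency with the modulus of the corresponding transfer function evaluated on the imaginary axis. Consider~\eqref{lem1a} first. Since $\mathcal{P}$ and $\mathcal{K}$ are linear, the closed-loop map from the control noise $w$ to the plant input $u$ is the LTI system with transfer function $T_{uw}(s) = 1/(1+G(s)C(s))$; writing $g_{uw}$ for its impulse response, the stationary response is $u(t) = (g_{uw} * w)(t)$. Computing the auto-covariance, $r_u(\tau) = \iint g_{uw}(a)\,g_{uw}(b)\,r_w(\tau-a+b)\,da\,db$, and then applying the Fourier transform~\eqref{PSD} yields the standard filtering relation $\phi_u(\omega) = |T_{uw}(j\omega)|^2\,\phi_w(\omega)$. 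Hence $T_{uw}(\omega) = \sqrt{\phi_u(\omega)/\phi_w(\omega)} = |T_{uw}(j\omega)| = |T_{uw}(s)|$ with $s=j\omega$, for every $\omega \in \mathbb{R}$; taking logarithms of both sides and integrating against $(2\pi)^{-1}\,d\omega$ over $\mathbb{R}$ gives~\eqref{lem1a}.

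The remaining three identities follow from the identical computation applied to the channels $w \mapsto y$, $d \mapsto u$, and $d \mapsto y$, whose transfer functions are $G(s)/(1+G(s)C(s))$, $-C(s)/(1+G(s)C(s))$, and $-G(s)C(s)/(1+G(s)C(s))$. The filtering relation then gives $\phi_y = |T_{yw}(j\omega)|^2\phi_w$, $\phi_u = |T_{ud}(j\omega)|^2\phi_d$, and $\phi_y = |T_{yd}(j\omega)|^2\phi_d$ (the signs being irrelevant since only moduli enter), hence the pointwise identities $T_{yw}(\omega) = |T_{yw}(s)|$, $T_{ud}(\omega) = |T_{ud}(s)|$, and $T_{yd}(\omega) = |T_{yd}(s)|$ with $s=j\omega$. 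For~\eqref{lem1c} and~\eqref{lem1d} one integrates against $(2\pi\omega^2)^{-1}\,d\omega$ instead of $(2\pi)^{-1}\,d\omega$; since the two integrands coincide frequency-by-frequency, the choice of weight is immaterial to the equality and, as noted in Remark~1, only serves to keep the Bode-like integral of $T_{yd}$ finite.

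I expect the delicate points to be bookkeeping rather than analysis. First, with both noises acting, independence gives $\phi_u = |T_{uw}(j\omega)|^2\phi_w + |T_{ud}(j\omega)|^2\phi_d$, which exceeds $|T_{uw}(j\omega)|^2\phi_w$; the identities~\eqref{lem1a}--\eqref{lem1d} are therefore to be read with only the noise named in the hypothesis present and the other set to zero, matching the one-disturbance-at-a-time meaning of the transfer functions in~\eqref{SF1} --- which is exactly why~\eqref{lem1a}--\eqref{lem1b} require only $w$ to be wide sense stationary and~\eqref{lem1c}--\eqref{lem1d} only $d$. Second, the left-hand sides make sense only when the closed loop is mean-square stable, so that a wide-sense-stationary input produces a wide-sense-stationary signal and the transient due to the finite-entropy random initial state is asymptotically negligible; this is what legitimizes the steady-state convolution representations of $u$ and $y$ above and the existence of $\phi_u$ and $\phi_y$ in~\eqref{PSD}. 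Modulo these provisos, the argument is a routine application of linear-systems spectral calculus.
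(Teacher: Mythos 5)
Your proof is correct and follows essentially the same route as the paper: both arguments reduce each of~\eqref{lem1a}--\eqref{lem1d} to the pointwise identity $T_{uw}(\omega)=|T_{uw}(j\omega)|$ (and its three analogues) via the wide-sense-stationary filtering relation for PSDs, after which integrating against $(2\pi)^{-1}\,d\omega$ or $(2\pi\omega^2)^{-1}\,d\omega$ is immediate. The one genuine difference is the direction in which the filtering relation is applied. You represent $u=g_{uw}*w$ through the closed-loop filter $T_{uw}(s)$, which is why you must add the mean-square-stability proviso so that this causal, stable convolution representation of the steady state exists. The paper instead works forward through the open loop: it uses the loop equations $w=u+v$ with $v=l*u$, $l=g*c$ (respectively $d=e+y$ with $y=l*e$, $u=c*e$), expands $\phi_w=\bigl(1+L(-j\omega)\bigr)\bigl(1+L(j\omega)\bigr)\phi_u$ (respectively $\phi_d$ in terms of $\phi_e$), and only needs $\phi_u\not\equiv 0$ (respectively $\phi_e\not\equiv 0$) together with wide-sense stationarity of the internal signals to take the ratio --- so no closed-loop stability hypothesis is invoked, consistent with the lemma statement. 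The two computations are algebraically equivalent, your sign remark is harmless since only moduli enter, and your one-disturbance-at-a-time reading (noting $\phi_u=|T_{uw}|^2\phi_w+|T_{ud}|^2\phi_d$ if both noises act) matches the paper's setup, where the control-noise and measurement-noise configurations are analyzed in separate block diagrams.
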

\begin{proof}
	The proof is given in \hyperref[app]{Appendix}.
\end{proof}

\begin{remark}
	Since \hyperref[lem1]{Lemma~1} implies that the magnitudes of Bode integrals equal the magnitudes of the corresponding Bode-like integrals when disturbances are wide sense stationary, it is reasonable  to infer that the lower bounds of some Bode-like integrals defined in~\eqref{Bode-Like} should be identical to the lower bounds of Bode integrals derived in~\cite{Bode_1945, Middleton_1991, Seron_2012}, despite the difference of initial conditions,~\cite{Ishii_SCL_2011}. Moreover, when the signals are not stationary and the Bode-like integrals are not defined, we can still resort to mutual information rates to describe the performance limitations in stochastic continuous-time systems as in ~\cite{Martin_TAC_2007, Li_TAC_2013}, which we will elaborate in~\hyperref[sec4]{Section~4}.
\end{remark}

\noindent In the following sections, we first discuss the information-theoretic representation of the Bode-like integrals defined in~\eqref{Bode-Like}, and then we derive the lower bounds for these performance limitations.

\section{Information-Theoretic Representations of Bode-Like Integrals}\label{sec3}
When signals in feedback systems are stationary Gaussian, we show that the Bode-like integrals~\eqref{Bode-Like} can be bounded below by the difference between two mutual information rates. This information-theoretic representation of Bode-like integrals not only enables to derive the lower bounds of Bode-like integrals with tools from information theory, but provides with an alternative metric to measure the performance limitations of stochastic continuous-time systems when Bode-like integrals are undefined, \textit{i.e.} the stationary assumption fails to hold. The following theorem gives the information-theoretic representation for sensitivity and load disturbance sensitivity Bode-like integrals.

\begin{theorem}\label{thm1a}
	For the general feedback control system, when $(u, v)$ and $(w, v)$ form stationary processes, $\phi_u, \ \phi_v, \textrm{ and } \phi_w \in \mathbb{F}$, and $w$ is a stationary Gaussian process, the sensitivity Bode-like integral satisfies
	\begin{equation}\label{Thm1_R1}
		\dfrac{1}{2\pi}\int_{-\infty}^{\infty} \log T_{uw}(\omega) d\omega \geq I_{\infty}(u; v) - I_{\infty}(w; v),
	\end{equation}
	and the load disturbance sensitivity Bode-like integral satisfies
	\begin{equation}\label{Thm1_R2}
		\dfrac{1}{2\pi}\int_{-\infty}^{\infty} \log T_{yw}(\omega) {d\omega}  \geq I_{\infty}(y; v) - I_{\infty}(w; v) +  \dfrac{1}{2\pi}\int_{-\infty}^{\infty}\log|G(s)| {d\omega}.
	\end{equation}
\end{theorem}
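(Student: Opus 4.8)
The plan is to convert each Bode-like integral into a difference of differential-entropy rates, and then to peel off the two mutual-information-rate terms by exploiting the additive, LTI structure of the loop. Throughout, $h_\infty(\cdot)$ denotes the differential-entropy-rate analogue of $I_\infty(\cdot)$.

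\emph{Step 1: reduction to entropy rates.} Since $T_{uw}(\omega)=\sqrt{\phi_u(\omega)/\phi_w(\omega)}$ and $T_{yw}(\omega)=\sqrt{\phi_y(\omega)/\phi_w(\omega)}$, the two integrands equal $\tfrac12\bigl(\log\phi_u-\log\phi_w\bigr)$ and $\tfrac12\bigl(\log\phi_y-\log\phi_w\bigr)$. Because $w$ is stationary Gaussian, the entropy-rate formula for stationary Gaussian processes gives $\tfrac{1}{4\pi}\int_{-\infty}^{\infty}\log\phi_w\,d\omega = h_\infty(w)-\tfrac12\log(2\pi e)$ with equality, whereas for the possibly non-Gaussian processes $u$ and $y$ the maximum-entropy property (\hyperref[ppt3]{Property~\ref{ppt3}}), applied to the sampled vectors of \hyperref[ppt0]{Property~\ref{ppt0}} and passed to the limit, yields $\tfrac{1}{4\pi}\int_{-\infty}^{\infty}\log\phi_u\,d\omega \ge h_\infty(u)-\tfrac12\log(2\pi e)$ and the analogue for $y$. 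The class-$\mathbb{F}$ hypotheses on $\phi_u,\phi_v,\phi_w$ (hence on $\phi_y=|G|^2\phi_u$) keep these integrals, and the mutual-information rates below, finite. Subtracting, I obtain $\tfrac{1}{2\pi}\int_{-\infty}^{\infty}\log T_{uw}(\omega)\,d\omega \ge h_\infty(u)-h_\infty(w)$ and $\tfrac{1}{2\pi}\int_{-\infty}^{\infty}\log T_{yw}(\omega)\,d\omega \ge h_\infty(y)-h_\infty(w)$.

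\emph{Step 2: extracting the information rates.} Decompose $h_\infty(x)=I_\infty(x;v)+h_\infty(x\mid v)$ for $x\in\{u,y,w\}$. By the structure of the loop the plant input splits as $u=w+r$, with the controller output $r$ a causal functional of $v$; conditioning on the sampled $v^T$ (and on the finite-entropy initial data, whose contribution is $O(1)$ and vanishes in the rate) and using translation invariance of conditional entropy (\hyperref[ppt1]{Property~\ref{ppt1}}) gives $h(u^T\mid v^T)=h(w^T\mid v^T)$, so $h_\infty(u\mid v)=h_\infty(w\mid v)$ and hence $h_\infty(u)-h_\infty(w)=I_\infty(u;v)-I_\infty(w;v)$; together with Step~1 this is~\eqref{Thm1_R1}. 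For the output, $y$ is the LTI response of $G$ to $u$, so the sampled vector of $y$ is a smooth map of that of $u$ (invertible once the initial state is fixed); the change-of-variables identity for differential entropy (\hyperref[ppt2]{Property~\ref{ppt2}}) followed again by \hyperref[ppt1]{Property~\ref{ppt1}} gives $h(y^T\mid v^T)=h(w^T\mid v^T)+\mathbb{E}[\log|J_\delta|]$, where $J_\delta$ is the Jacobian of the sampled input--output operator of $G$, and its normalized log-determinant $\tfrac1T\mathbb{E}[\log|J_\delta|]$ converges to the log-gain integral $\tfrac{1}{2\pi}\int_{-\infty}^{\infty}\log|G(s)|\,d\omega$ as $\delta\to0$ and $T\to\infty$. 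Thus $h_\infty(y\mid v)=h_\infty(w\mid v)+\tfrac{1}{2\pi}\int_{-\infty}^{\infty}\log|G(s)|\,d\omega$, and substituting into $h_\infty(y)-h_\infty(w)=I_\infty(y;v)-I_\infty(w;v)+h_\infty(y\mid v)-h_\infty(w\mid v)$ and combining with Step~1 gives~\eqref{Thm1_R2}.

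\emph{Main obstacle.} The delicate point is the identification of $\tfrac1T\mathbb{E}[\log|J_\delta|]$ with $\tfrac{1}{2\pi}\int_{-\infty}^{\infty}\log|G(s)|\,d\omega$: the naive sampled convolution matrix is triangular with the feedthrough term on its diagonal and degenerates when $G$ is strictly proper, so one must work from the exact discretization of a state-space realization (or regularize), justify the interchange of the $\delta\to0$ and $T\to\infty$ limits by a continuous-time Szeg\H{o}-type argument, and verify that the plant's (and controller's) initial-state entropies are $o(T)$. The class-$\mathbb{F}$ assumptions are exactly what render all the limiting integrals --- including those defining $I_\infty(\cdot;v)$ through \hyperref[ppt4]{Property~\ref{ppt4}} --- absolutely convergent, so the entropy bookkeeping above is legitimate.
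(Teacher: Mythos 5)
There is a genuine gap, and it lies at the very foundation of your Step~1. The identity you invoke, $\tfrac{1}{4\pi}\int_{-\infty}^{\infty}\log\phi_w\,d\omega = h_\infty(w)-\tfrac12\log(2\pi e)$, is the Kolmogorov--Szeg\H{o} entropy-rate formula, which is a \emph{discrete-time} statement (with the integral over $[-\pi,\pi]$). For continuous-time processes a finite differential entropy rate $h_\infty(\cdot)$ does not exist in the sense you need: the entropy of the sampled vectors in Property~1 carries a $\log\delta$-type divergence per sample as $\delta\to 0$, and the individual integrals $\int_{-\infty}^{\infty}\log\phi_u\,d\omega$, $\int_{-\infty}^{\infty}\log\phi_w\,d\omega$ are generally divergent (the class-$\mathbb{F}$ assumption only controls the $\log(1-\varphi)$ factor, not the rational part over an infinite band). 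This is precisely the obstruction the paper flags in the introduction as the reason Kolmogorov's entropy-rate equality cannot be used in continuous time, and why it instead works with differences of \emph{mutual informations of finite sampled vectors} (where the divergent terms cancel identity-by-identity before any limit is taken), bounds them by their Gaussianized counterparts via the maximum-entropy property, and only then converts $I_\infty(y_G;v_G)-I_\infty(w;v_G)$ into PSD integrals through Pinsker's formula (Property~5). Your decompositions $h_\infty(x)=I_\infty(x;v)+h_\infty(x\mid v)$ and the subtraction of two individually ill-defined quantities cannot be made rigorous as written; the cancellation has to be performed inside each finite-$k$ sampled expression, as in the paper's chain of (in)equalities, not at the level of rates.

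The second unresolved point is the one you yourself flag: the Szeg\H{o}-type identification $\tfrac1T\mathbb{E}[\log|J_\delta|]\to\tfrac{1}{2\pi}\int_{-\infty}^{\infty}\log|G(s)|\,d\omega$. This is not a technicality to be deferred --- for strictly proper $G$ the sampled input--output Jacobian degenerates (its diagonal entries vanish with $\delta$), the double limit $\delta\to0$, $T\to\infty$ is not interchangeable without substantial work, and the target integral itself may be $-\infty$. The paper never needs this limit: in its derivation the Jacobian terms $\mathbb{E}[\log|J_{[u^{(\delta(k))}]^k}([y^{(\delta(k))}]^k)|]$ cancel exactly between the non-Gaussian and Gaussianized expressions (using $|J_{[u^{(\delta(k))}]^k}([y^{(\delta(k))}]^k)| = |J_{[u_G^{(\delta(k))}]^k}([y_G^{(\delta(k))}]^k)|$), and the term $\tfrac{1}{2\pi}\int_{-\infty}^{\infty}\log|G(s)|\,d\omega$ in~\eqref{Thm1_R2} emerges algebraically from Property~5 together with the PSD identities $\phi_{u_G}=G^{-1}(s)G^{-1}(-s)\phi_{y_G}$, $\phi_{u_Gv_G}=G^{-1}(-s)\phi_{y_Gv_G}$, $\phi_{v_Gu_G}=G^{-1}(s)\phi_{v_Gy_G}$ and $w=v_G+u_G$ established in the proof of Lemma~1. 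To repair your argument you would either have to prove the continuous-time Szeg\H{o} limit you assume, or abandon the entropy-rate route and follow the mutual-information-rate route of the paper.
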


\begin{proof}[\textbf{\textit{Proof}}]
	The block diagram of a general continuous-time feedback system subject to control noise $w(t)$ is illustrated in~\hyperref[fig4]{Figure~3}.
	\begin{figure}[H]\label{fig4}
		\centering \vspace{-1em}
		\includegraphics[width=0.5\textwidth]{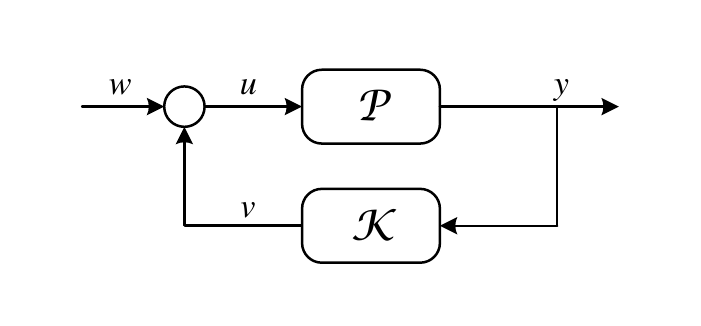} \vspace{-2em}
		\caption{\small General feedback system with control noise.}
	\end{figure}
	\noindent The first inequality in~\hyperref[thm1a]{Theorem~1} has been proved as \textit{Theorem~4.8} in~\cite{Li_TAC_2013}. Thus, only the proof of inequality~\eqref{Thm1_R2} is provided here. When ${y}_G, {u}_G$, and ${v}_G$ respectively denote the Gaussian processes with the same covariances as the random processes ${y}, {u}$, and ${v}$, the following inequality holds
	\begin{align}
			I(y^t; v^t) - I(w^t; v^t) \neweq{(a)} \ &  \lim_{k \rightarrow \infty} \left\{ I( [y^{(\delta(k))}]^k; [u^{(\delta(k))}]^k) - I( [d^{(\delta(k))}]^k; [u^{(\delta(k))}]^k)\right\} \nonumber \\
											   	 \neweq{(b)} \ & \lim_{k \rightarrow \infty} \Big\{  h([y^{(\delta(k))}]^k) - h([u^{(\delta(k))}]^k | [v^{(\delta(k))}]^k) - \mathbb{E}[\log |J_{[u^{(\delta(k))}]^k}( [y^{(\delta(k))}]^k)| ] \nonumber \\
											   						 &  \hspace{86pt} - h([w^{(\delta(k))}]^k) + h([w^{(\delta(k))}]^k | [v^{(\delta(k))}]^k) \Big\} \label{Thm1_eq1}\\
			\newleq{(c)} \ & \lim_{k \rightarrow \infty} \left\{  h([y_G^{(\delta(k))}]^k) - h([w^{(\delta(k))}]^k) - \mathbb{E} [\log |J_{[u^{(\delta(k))}]^k}([y^{(\delta(k))}]^k) | ]  \right\}  \nonumber \\
			\neweq{(d)} \ & h([y_G^{(\delta(k))}]^k) - h([u_G^{(\delta(k))}]^k | [v_G^{(\delta(k))}]^k) - \mathbb{E} [\log |J_{[u_G^{(\delta(k))}]^k}([y_G^{(\delta(k))}]^k)|] \nonumber \\
								 & \hspace{56pt} - h([w^{(\delta(k))}]^k) + h([w^{(\delta(k))}]^k| [v_G^{(\delta(k))}]^k)  \nonumber  \\
			\neweq{(e)} \ & I(y^t_G; v^t_G) - I(w^t; v^t_G). \nonumber
	\end{align}

\noindent In the derivations above (a) follows \hyperref[ppt0]{Property~1}; (b) is obtained by applying \hyperref[ppt1]{Property~2} and \hyperref[ppt2]{Property~3} with $J_{[u_G^{(\delta(k))}]^k}([y_G^{(\delta(k))}]^k)$ being the Jacobian matrix  of $[y_G^{(\delta(k))}]^k$ with respect to $[u_G^{(\delta(k))}]^k$; (c) employs \hyperref[ppt3]{Property~4} and identity $h([u^{(\delta(k))}]^k | [v^{(\delta(k))}]^k)  =  h([w^{(\delta(k))}]^k - [v^{(\delta(k))}]^k | [v^{(\delta(k))}]^k) =  h([w^{(\delta(k))}]^k | [v^{(\delta(k))}]^k)$ derived by \hyperref[ppt1]{Property~2}; (d) uses the identities $h([u_G^{(\delta(k))}]^k | [v_G^{(\delta(k))}]^k) = \break h([w^{(\delta(k))}]^k | [v_G^{(\delta(k))}]^k)$ derived by \hyperref[ppt1]{Property~2} and $|J_{[u^{(\delta(k))}]^k}( [y^{(\delta(k))}]^k)| = |J_{[u_G^{(\delta(k))}]^k}( [y_G^{(\delta(k))}]^k)|$; and (e) follows the same arguments in steps (a) and (b). Based on inequality~\eqref{Thm1_eq1} and derivations in~\cite{Li_TAC_2013}, when noise ${w}(t)$ is stationary Gaussian, we have the following relationship on mutual information rates: $I_{\infty}({y}; {v}) - I_{\infty}({w}; {v}) \leq I_{\infty}({y}_G; {v}_G) - I_{\infty}({w}; {v}_G)$, where equality holds when $y(t)$ is Gaussian. Since $({y}_G, {v}_G)$ and $({w}, {v}_G)$ form stationary Gaussian processes and $\phi_{{u}}, \phi_{{v}}, \textrm{and } \phi_{{w}} \in \mathbb{F}$, subject to~\hyperref[ppt4]{Property~5}, we have the following inequality between load disturbance sensitivity Bode-like integral and mutual information rates:
\begin{align}
	 	& I_{\infty}({y}; {v}) - I_{\infty}({w}; {v})\nonumber \\
		\leq \ & I_\infty({y}_G; {v}_G) - I_\infty({w}; {v}_G) \nonumber \\
		\neweq{(a)} \ & -\dfrac{1}{4\pi} \int_{-\infty}^{\infty} \log\left(1 - \dfrac{\phi_{{y}_G{v}_G}({\omega})\phi_{{v}_G{y}_G}({\omega})}{\phi_{{y}_G}({\omega}) \phi_{{v}_G}({\omega})}\right) d{\omega} + \dfrac{1}{4\pi} \int_{-\infty}^{\infty} \log\left(1 - \dfrac{\phi_{{w}{v}_G}({\omega})\phi_{{v}_G{w}}({\omega})}{\phi_{{w}}({\omega}) \phi_{{v}_G}({\omega})}\right) d{\omega} \nonumber \\
		\neweq{(b)} \ & \dfrac{1}{2\pi}  \int_{-\infty}^{\infty}  \log T_{{y}{w}}({\omega}) \ d{ \omega} + \dfrac{1}{4\pi}\int_{-\infty}^{\infty} \log \dfrac{\phi_{{u}_G}({\omega}) \phi_{{v}_G}({\omega}) - \phi_{{u}_G{v}_G}({\omega})\phi_{{v}_G{u}_G}({\omega})}{\phi_{{y}_G}({\omega})\phi_{{v}_G}({\omega}) - \phi_{{y}_G{v}_G}({\omega})\phi_{{v}_G{y}_G}({\omega})} d{\omega} \label{Thm1_eq2} \\
		\neweq{(c)} \ & \dfrac{1}{2\pi}\int_{-\infty}^{\infty} \log T_{{y}{w}}({\omega}) \ d{ \omega} - \dfrac{1}{2\pi}\int_{-\infty}^{\infty} \log |{G}({s})| d{\omega}. \nonumber
\end{align}

\noindent In these derivations (a) is obtained by applying~\hyperref[ppt4]{Property~5} to $I_\infty({y}_G, {v}_G)$ and $I_\infty({w}, {v}_G)$; (b) employs \hyperref[lem1]{Lemma~1} and identities, $\phi_{{w}} = \phi_{{v}_G} + \phi_{{v}_G{u}_G} + \phi_{{u}_G{v}_G} + \phi_{{u}_G}$, $\phi_{{w}{v}_G}= \phi_{{u}_G{v}_G} + \phi_{{v}_G}$, and $\phi_{{v}_G{w}}=  \phi_{{v}_G{u}_G} + \phi_{{v}_G}$, which can be derived by the definition of PSD in~\eqref{PSD} and identity  ${w} = {v}_G + {u}_G$; and (c) follows $\phi_{{u}_G} = \allowbreak {G}^{-1}({s}){G}^{-1}(-{s}) \phi_{{y}_G}$, $\phi_{{u}_G{v}_G} = {G}^{-1}(-{s}) \phi_{{y}_G{v}_G}$, and $\phi_{{v}_G{u}_G} = {G}^{-1}({s})\phi_{{v}_G{y}_G}$, which can be inferred from the proof of~\hyperref[lem1]{Lemma~1} in~\hyperref[app]{Appendix}. Inequality~\eqref{Thm1_eq2} readily implies inequality~\eqref{Thm1_R2} in~\hyperref[thm1a]{Theorem~1}. This completes the proof.
\end{proof}

\begin{remark}
	According to~\hyperref[def3]{Definition~3}, all rational functions belong to class $\mathbb{F}$ function~\cite{Li_TAC_2013}. A presumption of derivation in~\eqref{Thm1_eq2} is that $\phi_{{v}_G}, \ \phi_{{u}_G},$ and $\phi_{{y}_G} \in \mathbb{F}$. This presumption can be easily fulfilled, especially when signals are Gaussian, since it is well known that the PSD of a zero-mean stationary Gaussian signal over all frequencies is a constant,~\cite{Fang_2017}. Meanwhile, the employment of~\hyperref[ppt2]{Property~3} in~\eqref{Thm1_eq1} requires $y=f(t)$ in~\hyperref[fig4]{Figure~3} be an injective function, which simplifies the technical development while delivering the essential conceptual message and was previously assumed  also in~\cite{Okano_Auto_2009,  Fang_TAC_2017}.
\end{remark}

The following theorem gives the information-theoretic representations of noise sensitivity and complementary sensitivity Bode-like integrals in the presence of measurement noise $d$.

\begin{manualtheorem}{\ref*{thm1a}\'}\label{thm1b}
	For the general continuous-time feedback control system, when $(\tilde{y}, \tilde{e})$ and $(\tilde{d}, \tilde{e})$ form stationary processes, $\phi_{\tilde{y}}, \ \phi_{\tilde{e}}, \textit{ and } \phi_{\tilde{d}} \in \mathbb{F}$, and $\tilde{d}$ is a stationary Gaussian process, the complementary sensitivity and noise sensitivity Bode-like integrals satisfy
	\begin{equation}\label{thm1b_a}
	\dfrac{1}{2\pi}\int_{-\infty}^{\infty} \log T_{yd}(\omega) \ \dfrac{d\omega}{\omega^2} \geq I_{\infty}(\tilde{y}; \tilde{e}) - I_{\infty}(\tilde{d}; \tilde{e}),
	\end{equation}
	\begin{equation}\label{thm1b_b}
	\dfrac{1}{2\pi}\int_{-\infty}^{\infty} \log T_{ud}(\omega) \ \dfrac{d\omega}{\omega^2}  \geq 	I_{\infty}(\tilde{u}; \tilde{e}) - I_{\infty}(\tilde{d}; \tilde{e}) - \dfrac{1}{2\pi}\int_{-\infty}^{\infty}\log|G(s)| \dfrac{d\omega}{\omega^2}.
	\end{equation}
\end{manualtheorem}
\begin{proof}

\noindent Consider the inverse system subject to measurement noise shown in~\hyperref[fig7]{Figure~4}.
\begin{figure}[H]\label{fig7}
	\centering \vspace{-2em}
	\includegraphics[width=0.65\textwidth]{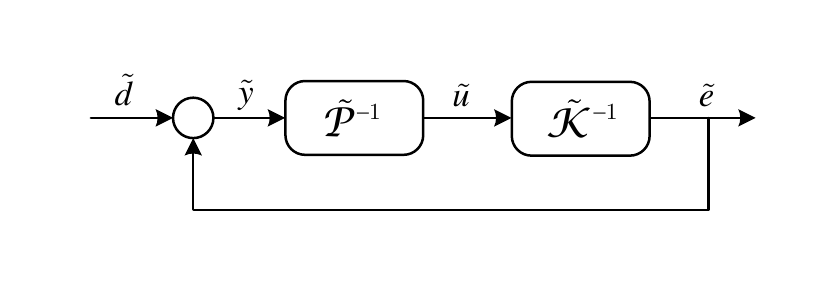} \vspace{-3.5em}
	\caption{\small Inverse system with measurement noise.}
\end{figure}
\noindent First, we consider the complementary sensitivity Bode-like integral. Preliminary work on this Bode-like integral was presented in an earlier paper~\cite{Wan_CDC_2018}. We again use $\tilde{y}_G$ and $\tilde{e}_G$ to respectively denote the Gaussian processes with the same covariance as the random processes $\tilde{y}$ and $\tilde{e}$. Then the following inequality can be established:

	\begin{align}
			I(\tilde{y}^t; \tilde{e}^t) - I(\tilde{d}^t; \tilde{e}^t) \neweq{(a)} \ & \lim_{k \rightarrow \infty}  \Big\{ h([\tilde{y}^{(\delta(k))}]^{k}) - h([\tilde{y}^{(\delta(k))}]^{k} | [\tilde{e}^{(\delta(k))}]^{k}) - h([\tilde{d}^{(\delta(k))}]^{k}) \nonumber\\
			& \hspace{200pt} + h([\tilde{d}^{(\delta(k))}]^{k} | [\tilde{e}^{(\delta(k))}]^{k}) \Big\} \nonumber \\
			\newleq{(b)} \ & \lim_{k \rightarrow \infty}  \left\{h([\tilde{y}_G^{(\delta(k))}]^{k}) - h([\tilde{d}^{(\delta(k))}]^{k}) \right\} \label{eq18} \\
			\neweq{(c)} \ & I(\tilde{y}^t_G; \tilde{e}^t_G) - I(\tilde{d}^t; \tilde{e}^t_G).\nonumber
	\end{align}
In these derivations 	(a) is obtained by \hyperref[ppt0]{Property~1} and \hyperref[ppt1]{Property~2}; (b) employs~\hyperref[ppt3]{Property~4} and identity $h([\tilde{y}^{(\delta(k))}]^{k} | \allowbreak [\tilde{e}^{(\delta(k))}]^{k}) = h([\tilde{d}^{(\delta(k))}]^{k} -[\tilde{e}^{(\delta(k))}]^{k} | [\tilde{e}^{(\delta(k))}]^{k}) = h([\tilde{d}^{(\delta(k))}]^{k} | [\tilde{e}^{(\delta(k))}]^{k})$ derived by \hyperref[ppt1]{Property~1}, and equality holds when $(\tilde{y}, \tilde{e})$ and $(\tilde{d}, \tilde{e})$ are Gaussian; (c) follows \hyperref[ppt0]{Property~1}, \hyperref[ppt1]{Property~2} and identity $h([\tilde{y}_G^{(\delta(k))}]^{k}|[\tilde{e}_G^{(\delta(k))}]^{k}) = h( [\tilde{d}^{(\delta(k))}]^{k} | [\tilde{e}_G^{(\delta(k))}]^{k})$. By~\eqref{eq18}, we readily have $I_\infty(\tilde{y}; \tilde{e}) - I_\infty(\tilde{d}; \tilde{e}) \leq I_\infty(\tilde{y}_G; \tilde{e}_G) - I_\infty(\tilde{d}; \tilde{e}_G)$. When $(\tilde{y}_G, \tilde{e}_G)$ and $(\tilde{d}, \tilde{e}_G)$ are stationary Gaussian processes and $\phi_{\tilde{y}_G}, \ \phi_{\tilde{e}_G}$, and $\phi_{\tilde{d}_G}\in\mathbb{F}$, by~\hyperref[ppt4]{Property~5}, an inequality between mutual information rates and complementary sensitivity Bode-like integral can be established:
	\begin{align}\label{thm1b_eq2}
	 &  I_\infty(\tilde{y}; \tilde{e}) - I_\infty(\tilde{d}; \tilde{e}) \nonumber\\
	\leq \ &	I_\infty(\tilde{y}_G; \tilde{e}_G) - I_\infty(\tilde{d}; \tilde{e}_G) \nonumber\\
	\neweq{(a)} \ &  -\dfrac{1}{4\pi} \int_{-\infty}^{\infty} \log\left(1 - \dfrac{\phi_{\tilde{y}_G\tilde{e}_G}(\tilde{\omega})\phi_{\tilde{e}_G\tilde{y}_G}(\tilde{\omega})}{\phi_{\tilde{y}_G}(\tilde{\omega}) \phi_{\tilde{e}_G}(\tilde{\omega})}\right) d\tilde{\omega} + \dfrac{1}{4\pi} \int_{-\infty}^{\infty} \log\left(1 - \dfrac{\phi_{\tilde{d}\tilde{e}_G}(\tilde{\omega})\phi_{\tilde{e}_G\tilde{d}}(\tilde{\omega})}{\phi_{\tilde{d}}(\tilde{\omega}) \phi_{\tilde{e}_G}(\tilde{\omega})}\right) d\tilde{\omega} \nonumber \\
	\neweq{(b)} \ &  \dfrac{1}{2\pi}\int_{-\infty}^{\infty} \log T_{\tilde{y}\tilde{d}}(\tilde{\omega}) \ d{\tilde \omega} + \dfrac{1}{4\pi} \int_{-\infty}^{\infty} \log \dfrac{\phi_{\tilde{y}_G}(\tilde{\omega})\phi_{\tilde{e}_G}(\tilde{\omega}) - \phi_{\tilde{y}_G\tilde{e}_G}(\tilde{\omega})\phi_{\tilde{e}_G\tilde{y}_G}(\tilde{\omega})}{\phi_{\tilde{y}_G}(\tilde{\omega})\phi_{\tilde{e}_G}(\tilde{\omega}) - \phi_{\tilde{y}_G\tilde{e}_G}(\tilde{\omega})\phi_{\tilde{e}_G\tilde{y}_G}(\tilde{\omega})} \ d\tilde{\omega}  \\
	\neweq{(c)} \ & \dfrac{1}{2\pi}\int_{-\infty}^{\infty} \log T_{yd}(\omega) \dfrac{d\omega}{\omega^2}.   \nonumber
	\end{align}
	\noindent In the derivations above (a) is the application of \hyperref[ppt4]{Property~5}; (b) follows~\hyperref[lem1]{Lemma~1} and identity $\tilde{d} = \tilde{y}_G + \tilde{e}_G$, which implies $\phi_{\tilde{d}} = \phi_{\tilde{y}_G} + \phi_{\tilde{y}_G \tilde{e}_G} + \phi_{\tilde{e}_G\tilde{y}_G} + \phi_{\tilde{e}_G}$, $\phi_{\tilde{d}\tilde{e}_G} = \phi_{\tilde{y}_G\tilde{e}_G} + \phi_{\tilde e_G}$, and $\phi_{\tilde{e}_G \tilde{d}} = \phi_{\tilde{e}_G \tilde{y}_G} + \phi_{\tilde{e}_G}$; and (c) can be implied by the frequency transformation~\eqref{freq_trans}.  Inequality~\eqref{thm1b_a} in \hyperref[thm1b]{Theorem~1'} can then be readily implied from~\eqref{thm1b_eq2}.

	Next, we consider the Bode integral of noise sensitivity-like function $T_{ud}(\omega)$. Since the steps for deriving inequality~\eqref{thm1b_b} can be inferred from the preceding proofs, we only show some critical steps in the following. Similar to the derivations in~\eqref{Thm1_eq1}, for the noise sensitivity Bode-like integral, we have the following inequality on mutual information rates: $	I_\infty(\tilde{u}_G; \tilde{e}_G) - I_\infty(\tilde{d}; \tilde{e}_G) \geq I_\infty(\tilde{u}; \tilde{e}) - I_\infty(\tilde{d}; \tilde{e})$,
%
	where $\tilde{u}_G$ and $\tilde{e}_G$ denote the Gaussian processes with the same covariances as $\tilde{u}$ and $\tilde{e}$, respectively. With the facts that $(\tilde{u}_G, \tilde{e}_G)$ and $(\tilde{d}, \tilde{e}_G)$ are stationary Gaussian processes and $\phi_{\tilde{u}_G}, \ \phi_{\tilde{e}_G}, \textrm{ and } \phi_{\tilde{d}} \in \mathbb{F}$, we have:
	\begin{align}\label{thm1b_eq4}
		&  I_\infty(\tilde{u}; \tilde{e}) - I_\infty(\tilde{d}; \tilde{e}) \nonumber\\
		\leq  \ &	I_\infty(\tilde{u}_G; \tilde{e}_G) - I_\infty(\tilde{d}; \tilde{e}_G) \nonumber\\
		\neweq{(a)} \ &  \dfrac{1}{2\pi}\int_{-\infty}^{\infty} \log T_{\tilde{u}\tilde{d}}(\tilde{\omega}) \ d{\tilde \omega} + \dfrac{1}{4\pi} \int_{-\infty}^{\infty} \log \dfrac{\phi_{\tilde{y}_G}(\tilde{\omega})\phi_{\tilde{e}_G}(\tilde{\omega}) - \phi_{\tilde{y}_G\tilde{e}_G}(\tilde{\omega})\phi_{\tilde{e}_G\tilde{y}_G}(\tilde{\omega})}{\phi_{\tilde{u}_G}(\tilde{\omega})\phi_{\tilde{e}_G}(\tilde{\omega}) - \phi_{\tilde{u}_G\tilde{e}_G}(\tilde{\omega})\phi_{\tilde{e}_G\tilde{u}_G}(\tilde{\omega})} \ d\tilde{\omega}  \\
		\neweq{(b)} \ & \dfrac{1}{2\pi}\int_{-\infty}^{\infty} \log T_{ud}(\omega) \dfrac{d\omega}{\omega^2} + \dfrac{1}{2\pi}\int_{-\infty}^{\infty}\log|G(s)| \dfrac{d\omega}{\omega^2}  \nonumber
	\end{align}
In these derivations 	(a) employs~\hyperref[ppt4]{Property~5}, \hyperref[lem1]{Lemma~1}, and identity $\tilde{d} = \tilde{y}_G + \tilde{e}_G$; and (b) follows~\eqref{freq_trans},~\eqref{tf_auxiliary}, and the identities, $\phi_{\tilde{u}_G} =  \tilde{G}^{-1}(\tilde{s})\tilde{G}^{-1}(-\tilde{s}) \phi_{\tilde{y}_G}$, $\phi_{\tilde{u}_G\tilde{e}_G} = \tilde{G}^{-1}(-\tilde{s}) \phi_{\tilde{y}_G\tilde{e}_G}$, and $\phi_{\tilde{e}_G\tilde{u}_G} = \tilde{G}^{-1}(\tilde{s})\phi_{\tilde{e}_G\tilde{y}_G}$, which can be inferred from the proof of~\hyperref[lem1]{Lemma~1} in~\hyperref[app]{Appendix}. Inequality~\eqref{thm1b_b} can then be readily implied from~\eqref{thm1b_eq4}. This completes the proof.
\end{proof}

\begin{remark}
	The terms $(2\pi)^{-1}\int_{-\infty}^{\infty}\log|G(s)| d\omega$ in~\eqref{Thm1_R2} and $(2\pi)^{-1} \int_{-\infty}^{\infty}\log|G(s)| / {\omega^2} \ {d\omega}$ in~\eqref{thm1b_b} are constants once a linear plant model $G(s)$ is given, thus the lower bounds in~\hyperref[thm1a]{Theorem~1} and~\hyperref[thm1b]{Theorem~1'} are invariant of the choice of control mapping. Nevertheless, these two integrals are not always bounded for arbitrary $G(s)$. A heuristic result on the boundedness of these types of integrals is available in~\cite{Wu_TAC_1992}.
\end{remark}

With the information-theoretic representations of Bode-like integrals derived in this section, we now can establish the lower bounds of Bode-like integrals by deriving the lower bounds of their information-theoretic representations with tools and preliminary results from information theory.

\section{Lower Bounds of Performance Limitations}\label{sec4}
Compared with the definitions of Bode-like integrals, which require the stationary Gaussian condition, the existence of their information-theoretic representations is less restrictive, as it only requires the stationary condition. In this section, the lower bounds of these information-theoretic representations, and hence the lower bounds of Bode-like integrals are derived for continuous-time stationary feedback systems. For systems subject to control noise, we have the following result.

\begin{theorem} \label{thm2a}
	When the closed-loop system shown in~\hyperref[fig4]{Figure~3} is mean-square stable, and the control noise $w(t)$ is stationary Gaussian, we have
		\begin{equation}\label{Thm2_R1}
			I_{\infty}(u; v) - I_{\infty}(w; v) \geq \sum_{p_i \in \mathcal{UP}} p_i,
		\end{equation}
		\begin{equation}\label{Thm2_R2}
		I_{\infty}({y}; {v}) - I_{\infty}({w}; {v}) \geq \sum_{p_i \in \mathcal{UP}} {p_i},
		\end{equation}
		where $\mathcal{UP}$ denotes the set of unstable poles in plant $\mathcal{P}$.
\end{theorem}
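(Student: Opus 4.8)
The plan is to establish \eqref{Thm2_R1} and \eqref{Thm2_R2} in turn, reducing the second to the first through the plant relation $y=Gu$.

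\textbf{The first bound.} $I_\infty(u;v)-I_\infty(w;v)\ge\sum_{p_i\in\mathcal{UP}}p_i$ is the continuous-time stochastic sensitivity bound, and I would take it directly from \cite{Li_TAC_2013}, the source from which the representation in \hyperref[thm1a]{Theorem~1} is already quoted. For a self-contained version its skeleton is: discretize by \hyperref[ppt0]{Property~1}; use the loop identity $w=u+v$ with \hyperref[ppt1]{Property~2} to get $I(u^t;v^t)-I(w^t;v^t)=h(u^t)-h(w^t)$; and then bound the entropy surplus $h(u^t)-h(w^t)$ from below by the exponential expansion along the unstable modes of the realization~\eqref{original_sp}, with mean-square stability forcing the stable part and all boundary terms to be $o(t)$.

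\textbf{The second bound.} It suffices to prove $I_\infty(y;v)=I_\infty(u;v)$, since then \eqref{Thm2_R2} is immediate from \eqref{Thm2_R1}. Conditioned on a realization of the initial state $x_0$, the path $y^t$ is obtained from $u^t$ by the (injective, cf.\ the hypothesis invoked in the proof of \hyperref[thm1a]{Theorem~1}) plant map, and conversely $u^t$ is recovered from $y^t$ by the inverse plant; so, after discretizing via \hyperref[ppt0]{Property~1} and passing to the limit, $y^t$ and $u^t$ become deterministic functions of one another given $x_0$, whence $I(y^t;v^t\mid x_0)=I(u^t;v^t\mid x_0)$. Expanding $I(y^t;v^t)$ and $I(u^t;v^t)$ by the chain rule around $x_0$, their difference reduces to combinations of $I(y^t;x_0)$, $I(u^t;x_0)$ and their $v^t$-conditioned analogues; since $x_0$ has finite entropy and the noise $w$ is a persistent, $x_0$-independent excitation that prevents $x_0$ from being reconstructed exactly from the observed signals, mean-square stability bounds each of these terms uniformly in $t$, so they vanish after division by $t$. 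Hence $I_\infty(y;v)-I_\infty(u;v)=0$.

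\textbf{Main obstacle.} The pointed difficulty is that when the plant relative degree $n-m>0$ the discretized plant map is not invertible, equivalently $(2\pi)^{-1}\int_{-\infty}^{\infty}\log|G(s)|\,d\omega=-\infty$ --- the very term that must be recognized as troublesome and carried separately in \hyperref[thm1a]{Theorem~1} and \hyperref[thm1b]{Theorem~1'}. The inclusion $I_\infty(y;v)\le I_\infty(u;v)$ is the easy, data-processing direction; the reverse --- recovering $u$ from $y$, morally by differentiation --- holds only in the sampling limit $\delta\to0$, so the identity $I(y^t;v^t\mid x_0)=I(u^t;v^t\mid x_0)$ demands a careful interchange of the sampling limit with the horizon limit $t\to\infty$, and one must also certify that the $x_0$-terms truly carry zero rate under mean-square stability. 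These are the steps where the stationarity and mean-square-stability hypotheses and the finite-entropy assumption on $x_0$ must be used with care; the remainder is bookkeeping with \hyperref[ppt0]{Property~1}, \hyperref[ppt1]{Property~2}, and \hyperref[ppt2]{Property~3}.
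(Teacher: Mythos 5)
Your overall architecture coincides with the paper's: inequality \eqref{Thm2_R1} is quoted from Theorem~4.8 of the Li reference, and \eqref{Thm2_R2} is reduced to it by showing $I_\infty(y;v)=I_\infty(u;v)$. The divergence, and the problem, is in how you establish that identity. The paper never conditions on $x_0$: it applies \hyperref[ppt0]{Property~1}, \hyperref[ppt1]{Property~2} and \hyperref[ppt2]{Property~3} directly to the discretized paths, invoking the same injectivity assumption on the plant map already used in the proof of \hyperref[thm1a]{Theorem~1}, so that the Jacobian expectation $\mathbb{E}[\log|J_{[u^{(\delta(k))}]^k}([y^{(\delta(k))}]^k)|]$ appears in both $h([y^{(\delta(k))}]^k)$ and $h([y^{(\delta(k))}]^k\,|\,[v^{(\delta(k))}]^k)$ and cancels, giving $I(u^t;v^t)=I(y^t;v^t)$ exactly (the paper also notes this can be read as a two-way data-processing argument). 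No initial-state bookkeeping arises.

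Your route instead writes $I(y^t;v^t)=I(y^t;v^t\mid x_0)+I(y^t;x_0)-I(y^t;x_0\mid v^t)$ (and likewise for $u$) and then discards the $x_0$-terms on the grounds that mean-square stability bounds $I(y^t;x_0)$, $I(u^t;x_0)$ and their conditioned versions uniformly in $t$. That claim is false in general and is the genuine gap. When $\mathcal{P}$ has unstable poles, the closed-loop signals must acquire information about the unstable component of $x_0$ at a rate of at least $\sum_{p_i\in\mathcal{UP}}p_i$; this accumulation is precisely the mechanism that produces the lower bound \eqref{Thm2_R1} in the cited Theorem~4.8 (and its discrete-time ancestors), so quantities such as $I(y^t;x_0)$ can grow linearly in $t$ even though $\sup_t\mathbb{E}[x^\mathrm{T}(t)x(t)]<\infty$ and $x_0$ is never reconstructed exactly (the posterior differential entropy of $x_0$ can still tend to $-\infty$). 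What you would actually need is that the combination $[I(y^t;x_0)-I(y^t;x_0\mid v^t)]-[I(u^t;x_0)-I(u^t;x_0\mid v^t)]$ is $o(t)$, and you offer no argument for that cancellation; since $y^t$ depends on $(u^t,x_0)$ jointly and not on $u^t$ alone, it does not follow from your conditional identity. Your ``main obstacle'' paragraph correctly flags the relative-degree non-invertibility issue (which the paper handles only by assumption, cf.\ Remarks~4 and~6), but fixing the zero-rate claim, or reverting to the paper's direct Jacobian-cancellation argument under the injectivity hypothesis, is what the proof of \eqref{Thm2_R2} actually requires.
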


\begin{proof}
	The first inequality~\eqref{Thm2_R1} in~\hyperref[thm2a]{Theorem~2} has been proven as \textit{Theorem 4.8} in~\cite{Li_TAC_2013}. In order to derive the lower bound in inequality~\eqref{Thm2_R2}, consider the general feedback configuration illustrated in~\hyperref[fig4]{Figure~3}. We have the following relationship
	\begin{align}
		I({u}^t; {v}^t) & \neweq{(a)} \lim_{k \rightarrow \infty}\{ h([u^{(\delta(k))}]^k) - h([u^{(\delta(k))}]^k | [v^{(\delta(k))}]^k) \} \nonumber\\
		& \neweq {(b)} h([y^{(\delta(k))}]^k) - \mathbb{E}[|J_{[u^{(\delta(k))}]^k}([y^{(\delta(k))}]^k)|] - h([y^{(\delta(k))}]^k | [v^{(\delta(k))}]^k) + \mathbb{E}[|J_{[u^{(\delta(k))}]^k}([y^{(\delta(k))}]^k)|]\nonumber \\
		& \neweq {(c)} I({y}^t; {v}^t). \label{Thm2_eq2}
	\end{align}
In the above derivations (a) follows~\hyperref[ppt0]{Property~1} and~\hyperref[ppt1]{Property~2}; (b) employs \hyperref[ppt2]{Property~3} with the same injective assumption in~\eqref{Thm1_eq1}, which gives $h([u^{(\delta(k))}]^k) = h([y^{(\delta(k))}]^k) - \mathbb{E}[|J_{[u^{(\delta(k))}]^k}([y^{(\delta(k))}]^k)|]$ and $h([u^{(\delta(k))}]^k | [v^{(\delta(k))}]^k) \allowbreak = h([y^{(\delta(k))}]^k | [v^{(\delta(k))}]^k) - \mathbb{E}[|J_{[u^{(\delta(k))}]^k}([y^{(\delta(k))}]^k)|]$, where $J_{[u^{(\delta(k))}]^k}([y^{(\delta(k))}]^k)$ is the Jacobian matrix of vector $[y^{(\delta(k))}]^k$ with respect to vector $[u^{(\delta(k))}]^k$; and (c) follows~\hyperref[ppt0]{Property~1} and~\hyperref[ppt1]{Property~2}. The derivation in~\eqref{Thm2_eq2} can also be verified by data-processing inequality~\cite{Cover_2012}, and equation~\eqref{Thm2_eq2} implies $I_\infty({u}; {v}) = I_\infty({y}; {v})$, which combing with~\eqref{Thm2_R1} gives~\eqref{Thm2_R2} in~\hyperref[thm2a]{Theorem~2}. This completes the proof.
\end{proof}

The following theorem derives the lower bounds of the information-theoretic representations for the feedback systems subject to measurement noise described in~\hyperref[fig7]{Figure~4}.

\begin{manualtheorem}{\ref*{thm2a}\'}\label{thm2b}
	When the closed-loop system shown in~\hyperref[fig7]{Figure~4} is mean-square stable and the inverse frequency measurement noise $\tilde{d}(t)$ is stationary Gaussian, we have
	\begin{equation}\label{Thm2_R3}
			I_{\infty}(\tilde{y}; \tilde{e}) - I_{\infty}(\tilde{d}; \tilde{e}) \geq \sum_{z_i \in \mathcal{UZ}}\dfrac{1}{z_i},
	\end{equation}
	\begin{equation}\label{Thm2_R4}
			I_{\infty}(\tilde{u}; \tilde{e}) - I_{\infty}(\tilde{d}; \tilde{e}) \geq \sum_{z_i \in \mathcal{UZ}}\dfrac{1}{z_i},
	\end{equation}
	where $\mathcal{UZ}$ denotes the set of nonminimum phase zeros of plant $\mathcal{P}$.
\end{manualtheorem}
\begin{proof}
	Consider the inverse system shown in~\hyperref[fig7]{Figure~4}. Applying \textit{Theorem 4.8} from~\cite{Li_TAC_2013} and noticing that the poles of the inverse plant model $\tilde{\mathcal{P}}^{-1}$ are at $s = 1/z_i$, the first inequality~\eqref{Thm2_R3} can be obtained~\cite{Wan_CDC_2018}. In order to derive inequality~\eqref{Thm2_R4}, consider the following equations
	\begin{equation}\label{thm2b_eq1}
		\begin{split}
			I(\tilde{y}^{t};\tilde{e}^t)  &  \neweq{(a)} \lim_{k \rightarrow \infty} \left\{h([\tilde{y}^{(\delta(k))}]^k) - h([\tilde{y}^{(\delta(k))}]^k|[\tilde{e}^{(\delta(k))}]^k) \right\} \\
			& \neweq{(b)}  \lim_{k \rightarrow \infty} \Big\{ h([\tilde{u}^{(\delta(k))}]^k) - \mathbb{E}[|J_{[\tilde{y}^{(\delta(k))}]^k}([\tilde{u}^{(\delta(k))}]^k)|] - h([\tilde{u}^{(\delta(k))}]^k | [\tilde{e}^{(\delta(k))}]^k)\\
			& \hspace{224pt} + \mathbb{E}[|J_{[\tilde{y}^{(\delta(k))}]^k}([\tilde{u}^{(\delta(k))}]^k)|] \Big\} \\
			& \neweq{(c)} I(\tilde{u}; \tilde{e})
			\end{split}
	\end{equation}
Here	(a) applies~\hyperref[ppt0]{Property~1} and~\hyperref[ppt1]{Property~2}; (b) adopts the assumption that $\tilde{u} = \tilde{f}^{-1}(\tilde{y})$ in~\hyperref[fig7]{Figure~4} is injective and \hyperref[ppt2]{Property~3}, which give $h([\tilde{y}^{(\delta(k))}]^k) = h([\tilde{u}^{(\delta(k))}]^k)  - \mathbb{E}[|J_{[\tilde{y}^{(\delta(k))}]^k}([\tilde{u}^{(\delta(k))}]^k)|]$ and $h([\tilde{y}^{(\delta(k))}]^k | [\tilde{e}^{(\delta(k))}]^k) \allowbreak = h([\tilde{u}^{(\delta(k))}]^k | [\tilde{e}^{(\delta(k))}]^k)  - \mathbb{E}[|J_{[\tilde{y}^{(\delta(k))}]^k}([\tilde{u}^{(\delta(k))}]^k)|]$, where $J_{[\tilde{y}^{(\delta(k))}]^k}([\tilde{u}^{(\delta(k))}]^k)$ is the Jacobian matrix of vector $[\tilde{u}^{(\delta(k))}]^k$ with respect to vector $[\tilde{y}^{(\delta(k))}]^k$; and (c) follows~\hyperref[ppt0]{Property~1} and~\hyperref[ppt1]{Property~2}. Since \eqref{thm2b_eq1} indicates thats $I_\infty(\tilde{y}; \tilde{e}) = I_\infty(\tilde{u}; \tilde{e})$, combing it with~\eqref{Thm2_R3} gives inequality~\eqref{Thm2_R4}. This completes the proof.
\end{proof}

With all the preceding theorems, the following corollary gives the lower bounds of the Bode-like integrals in stochastic continuous-time systems.

\begin{corollary} \label{cor3}
	For continuous-time feedback control system that is closed-loop stable, when $(u, v)$ and $(w, v)$ form stationary processes, $\phi_u, \ \phi_v, \textrm{ and } \phi_w \in \mathbb{F}$, and $w$ is a stationary Gaussian process, the sensitivity and the load disturbance sensitivity Bode-like integrals satisfy
	\begin{equation}\label{cor_eq1}
	\dfrac{1}{2\pi}\int_{-\infty}^{\infty} \log T_{uw}(\omega) d\omega \geq \sum_{p_i \in \mathcal{UP}} p_i,
	\end{equation}
	\begin{equation}\label{cor_eq2}
	\dfrac{1}{2\pi}\int_{-\infty}^{\infty} \log T_{yw}(\omega) {d\omega} \geq \sum_{p_i \in \mathcal{UP}} {p_i} + \dfrac{1}{2\pi}\int_{-\infty}^{\infty}\log|G(s)| {d\omega}.
	\end{equation}
	When $(\tilde{y}, \tilde{e})$ and $(\tilde{d}, \tilde{e})$ form stationary processes, $\phi_{\tilde{y}}, \ \phi_{\tilde{e}}, \textit{ and } \phi_{\tilde{d}} \in \mathbb{F}$, and $\tilde{d}$ is a stationary Gaussian process, the complementary sensitivity and noise sensitivity Bode-like integrals satisfy
	\begin{equation}\label{cor_eq3}
	\dfrac{1}{2\pi}\int_{-\infty}^{\infty} \log T_{yd}(\omega) \ \dfrac{d\omega}{\omega^2} \geq \sum_{z_i \in \mathcal{UZ}}\dfrac{1}{z_i},
	\end{equation}
	\begin{equation}\label{cor_eq4}
	\dfrac{1}{2\pi}\int_{-\infty}^{\infty} \log T_{ud}(\omega) \ \dfrac{d\omega}{\omega^2}  \geq  \sum_{z_i \in \mathcal{UZ}}\dfrac{1}{z_i} - \dfrac{1}{2\pi}\int_{-\infty}^{\infty}\log|G(s)| \dfrac{d\omega}{\omega^2}.
	\end{equation}
\end{corollary}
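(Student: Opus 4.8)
The plan is to derive each of the four inequalities by transitivity, chaining the relevant information-theoretic representation from \hyperref[thm1a]{Theorem~1} or \hyperref[thm1b]{Theorem~1'} with the matching pole/zero bound on mutual information rates from \hyperref[thm2a]{Theorem~2} or \hyperref[thm2b]{Theorem~2'}. Concretely, I would obtain~\eqref{cor_eq1} by composing~\eqref{Thm1_R1} with~\eqref{Thm2_R1}; obtain~\eqref{cor_eq2} by composing~\eqref{Thm1_R2} with~\eqref{Thm2_R2}, noting that the additive constant $(2\pi)^{-1}\int_{-\infty}^{\infty}\log|G(s)|d\omega$ passes through unchanged; obtain~\eqref{cor_eq3} by composing~\eqref{thm1b_a} with~\eqref{Thm2_R3}; and obtain~\eqref{cor_eq4} by composing~\eqref{thm1b_b} with~\eqref{Thm2_R4}, again carrying the constant $(2\pi)^{-1}\int_{-\infty}^{\infty}\log|G(s)|d\omega/\omega^2$ through. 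Each step is a one-line inequality chain, so no new estimate is needed.

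Before doing so, I would check that the corollary's standing assumptions really do license every hypothesis invoked along the chain. The stationarity and class-$\mathbb{F}$ conditions on the relevant signals and the stationary-Gaussian property of the exogenous noise are exactly what \hyperref[thm1a]{Theorem~1} and \hyperref[thm1b]{Theorem~1'} require, so those transfer verbatim. \hyperref[thm2a]{Theorem~2} and \hyperref[thm2b]{Theorem~2'} are instead stated under mean-square stability; here I would argue that closed-loop stability of the linear configuration together with the assumed stationarity of the closed-loop signals yields $\sup_{t\ge0}\mathbb{E}[x^{\mathrm{T}}(t)x(t)]<\infty$, so those theorems apply. For the measurement-noise pair I would also remind the reader that the hypotheses are phrased in the inverse-frequency coordinates $\tilde s=s^{-1}$ of~\eqref{freq_trans}, so that the stationary-Gaussian condition on $\tilde d$ and the class-$\mathbb{F}$ membership of $\phi_{\tilde y},\phi_{\tilde e},\phi_{\tilde d}$ are precisely the images of the corollary's assumptions under that transformation.

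I do not expect a genuine obstacle: all the substantive work --- the differential-entropy manipulations of~\eqref{Thm1_eq1} and~\eqref{eq18}, the appeal to \hyperref[ppt4]{Property~5} and~\cite[p.~181]{Pinsker_1964}, and the counting of unstable poles and nonminimum-phase zeros in Theorems~2 and~2' --- has already been carried out. The proof of the corollary is therefore purely a bookkeeping step, and the only thing that could go wrong is a mismatch between the corollary's hypotheses and those of the four building blocks, which is why I would devote most of the write-up to spelling out that alignment.
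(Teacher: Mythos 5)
Your proposal is correct and coincides with the paper's own proof, which consists precisely of chaining \hyperref[thm1a]{Theorem~1} and \hyperref[thm1b]{Theorem~1'} with \hyperref[thm2a]{Theorem~2} and \hyperref[thm2b]{Theorem~2'} and letting the constant $\log|G(s)|$ integrals pass through. Your extra attention to reconciling the corollary's ``closed-loop stable'' hypothesis with the mean-square stability assumed in Theorems~2 and~2', and to the inverse-frequency phrasing of the measurement-noise conditions, is more explicit than the paper's one-line proof but does not change the argument.
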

\begin{proof}
	\hyperref[cor3]{Corollary~3} can be implied by applying \hyperref[thm2a]{Theorem~2} and \hyperref[thm2b]{Theorem~2'} to \hyperref[thm1a]{Theorem~1} and \hyperref[thm1b]{Theorem~1'}, respectively.
\end{proof}

\begin{remark}\label{rem6}
	From the proof of \hyperref[lem1]{Lemma~1} given in~\hyperref[app]{Appendix}, the following identities can be inferred, $T_{yw}(\omega) = G(s) T_{uw}(\omega)$ and $T_{yd}(\omega) = G(s) T_{ud}(\omega)$.	Once we admit the results given in~\eqref{cor_eq1} and~\eqref{cor_eq3}, we can also retrieve inequalities~\eqref{cor_eq2} and~\eqref{cor_eq4} by substituting the preceding two identities into~\eqref{cor_eq1} and~\eqref{cor_eq3}, respectively. Meanwhile, since a relationship similar to $T_{uw} + T_{yd} = 1$ does not exist between $T_{yw}$ and $T_{ud}$, the boundedness of right-hand side terms in~\eqref{cor_eq2} and~\eqref{cor_eq4} is more difficult to guarantee, and it is more proper to consider the lower bounds in~\eqref{cor_eq2} and~\eqref{cor_eq4} as a metric for performance limitations. Lastly, the injective assumptions adopted in~\eqref{Thm1_eq1}, \eqref{Thm2_eq2}, and~\eqref{thm2b_eq1} as well as the stationary Gaussian condition on the inverse frequency signals when studying the weighted Bode-like integrals are still some interesting topics that deserve further investigation.
\end{remark}

We notice that since the pioneering papers~\cite{Martin_TAC_2007, Martin_TAC_2008}, information-theoretic approaches have been widely employed to seek the lower bounds of Bode-like integrals. Among these papers, the lower bound of sensitivity Bode-like integral for discrete-time system has been discussed in~\cite{Martin_TAC_2007, Martin_TAC_2008}. The lower bound of complementary sensitivity Bode-like integral has been investigated in~\cite{Okano_Auto_2009}. A comprehensive investigation of the MIMO discrete-time control systems as well as the lower bounds for the discrete-time load disturbance sensitivity and noise sensitivity Bode-like integrals have been put forward in~\cite{Ishii_SCL_2011}. For  continuous-time control systems, \cite{Li_TAC_2013} has defined the lower bound of sensitivity Bode-like integral, and the lower bounds of continuous-time complementary sensitivity, load disturbance sensitivity, and noise sensitivity Bode-like integrals have been discussed in this paper. With information-theoretic approaches, Bode-like integrals of non-Gaussian system~\cite{Fang_TAC_2017}, continuous-time system with non-LTI plant~\cite{Fang_2017}, stochastic switched system~\cite{Li_Auto_2013}, and distributed system~\cite{Zhao_Auto_2014} have also been investigated.

\section{Conclusion}\label{sec5}
In this paper, we investigated the performance limitations of linear continuous-time control systems subject to control and measurement noise via an information-theoretic approach. Bode integrals of four different sensitivity-like functions were defined, and the relationship between Bode integrals and Bode-like integrals were established for stochastic continuous-time systems. The information-theoretic representations of Bode-like integrals were derived, and the lower bounds of these representations and hence the Bode-like integrals were established in terms of the unstable zeros and poles of plant model. Some open problems and challenges are discussed towards the end, and the hope is that more innovative results can be put forward to expand the frontier in this direction.

\section{Acknowledgment}
This work was partially supported by AFOSR and NSF. The authors would specially acknowledge the readers and staff on arXiv.org.

\section*{Appendix}\label{app}
\noindent \textbf{\textit{Proof of \hyperref[lem1]{Lemma 1}}}

Part of the proof in this appendix relies on the results given in~\cite{Astrom_2012}. We first consider the scenario in the presence of control noise $w(t)$ and start with Bode-like integrals. When both the plant model $\mathcal{P}$ and control mapping $\mathcal{K}$ in~\hyperref[fig1]{Figure~1} are linear, using $v(t)= (c*(g*u))(t) =  \int_{0}^{\infty} c(\theta) \left[ \int_{0}^{\infty} g(\eta)  u(t - \theta - \eta) d\eta \right]  d \theta$, we define $L(s) = G(s)C(s) = G(j\omega)C(j\omega) = \int_{0}^{\infty}(g*c)(t)\cdot\mathrm{e}^{-j\omega t} dt = \int_{0}^{\infty} l(t) \cdot \mathrm{e}^{-j\omega t} dt$. Since $w(t) = u(t) + v(t)$, the PSD function $\phi_d(\omega)$ in~\eqref{PSD} satisfies
\begin{equation}\label{App_Eq1}
	\begin{split}
		\phi_w(\omega)  =  \int_{-\infty}^{\infty} r_w(\tau) \cdot {\rm e}^{-j\omega\tau} d\tau &  = \int_{-\infty}^{\infty} \left[r_u(\tau) + r_{uv}(\tau) + r_{vu}(-\tau) + r_v(\tau)\right] {\rm e}^{-j\omega\tau} d\tau\\
		& = \phi_u(\omega) + \phi_{uv}(\omega) + \phi_{vu}(\omega) + \phi_v(\omega).
	\end{split}
\end{equation}
When $w(t)$, $v(t)$, and $u(t)$ are wide sense stationary, with $v(t) = \int_{0}^{\infty}l(\sigma') \cdot u(t - \sigma') d\sigma'$ and $\tau = \sigma - t$, the covariances $r_u, r_{uv}, r_{vu}$ and $r_v$ in~\eqref{App_Eq1} respectively satisfy $r_u(\sigma, t) = \textrm{Cov}[u(t+ \sigma -t), u(t)] =\textrm{Cov}[u(t+\tau), u(t)]= r_u(\tau)$, $	r_{uv}(\sigma, t) = \textrm{Cov}[u(\sigma), \int_{0}^{\infty}l(\sigma')u(t - \sigma') d\sigma']=\int_{0}^{\infty}l(\sigma') \cdot \textrm{Cov}[u(\sigma), u(t - \sigma')] d\sigma' = \int_{0}^{\infty}l(\sigma') \cdot r_u(\sigma' + \sigma - t) dv' = \int_{0}^{\infty}l(\sigma') \cdot r_u(\sigma' + \tau ) d\sigma' = r_{uv}(\tau)$, $r_{vu}(\sigma,t) = \textrm{Cov}[\int_{0}^{\infty}  l(\sigma') \cdot u(\sigma-\sigma')d\sigma', u(t)] = \int_{0}^{\infty}l(\sigma') \cdot \textrm{Cov}[u(\sigma-\sigma'), u(t)] d\sigma' = \int_{0}^{\infty}l(\sigma') \cdot r_u(- \sigma' + \sigma - t)d\sigma' = \int_{0}^{\infty} l(\sigma') \cdot r_u(-\sigma' + \tau) d\sigma'  = r_{vu}(\tau)$, and $r_v(\sigma,t) = \textrm{Cov}[\int_{0}^{\infty}l(\sigma') \cdot u(\sigma - \sigma')d\sigma', \int_{0}^{\infty}l(t') \cdot u(t-t')dt'] = \int_{0}^{\infty}\int_{0}^{\infty}l(\sigma') \cdot l(t') \cdot \textrm{Cov}[u(\sigma-\sigma'), u(t-t')] d\sigma' dt' = \int_{0}^{\infty}\int_{0}^{\infty}l(\sigma') \cdot l(t') \cdot r_u(\tau - \sigma' + t') d\sigma' dt' = r_v(\tau)$. Hence, the spectral density functions $\phi_{uv}, \phi_{vu}$, and $\phi_v$ in~\eqref{App_Eq1} respectively satisfy
	\begin{align}
	\phi_{uv}(\omega) & = \frac{1}{2\pi} \int_{-\infty}^{\infty} r_{uv}(\tau) \cdot {\rm e}^{-j\omega\tau}d\tau = \frac{1}{2\pi} \int_{-\infty}^{\infty} \mathrm{e}^{-j\omega\tau} \int_{0}^{\infty}l(\sigma') \cdot r_u(\sigma' + \tau) d\sigma' d\tau \nonumber \\
		& = \frac{1}{2\pi} \int_{-\infty}^{\infty} \mathrm{e}^{-j\omega(\tau+\sigma')} \int_{0}^{\infty} \mathrm{e}^{j\omega \sigma'} \cdot l(\sigma') \cdot r_u(\tau + \sigma') d\sigma' d\tau \\
		&= \frac{1}{2\pi} \int_{0}^{\infty}  {\rm e}^{j\omega \sigma'} \cdot l(\sigma') \int_{-\infty}^{\infty} {\rm e}^{-j\omega(\tau+\sigma')} \cdot r_u(\tau + \sigma') d\tau d\sigma' = L(-j\omega) \cdot \phi_u(\omega), \nonumber\\
	\phi_{vu}(\omega) &= \frac{1}{2\pi} \int_{-\infty}^{\infty} r_{vu}(\tau) \cdot {\rm e}^{-j\omega\tau}d\tau = \frac{1}{2\pi} \int_{-\infty}^{\infty} \mathrm{e}^{-j\omega(\tau-\sigma')} \int_{0}^{\infty}\mathrm{e}^{-j\omega \sigma'}\cdot l(\sigma') \cdot r_u(\tau - \sigma') d\sigma' d\tau \nonumber \\
	& = \frac{1}{2\pi} \int_{0}^{\infty}  {\rm e}^{-j\omega \sigma'} \cdot l(\sigma')  \int_{-\infty}^{\infty}  {\rm e}^{-j\omega(\tau-\sigma')} \cdot r_u(\tau - \sigma') d\tau d\sigma' = L(j\omega) \cdot \phi_u(\omega), \\
	\phi_v(\omega) & = \dfrac{1}{2\pi} \int_{-\infty}^{\infty} r_v(\tau) \cdot {\rm e}^{-j\omega\tau} d\tau = \dfrac{1}{2\pi} \int_{-\infty}^{\infty} \int_{0}^{\infty}\int_{0}^{\infty}l(\sigma') \cdot l(t') \cdot r_u(\tau - \sigma' + t') \ d\sigma' dt' \mathrm{e}^{-j\omega\tau}d\tau\nonumber\\
	& = \dfrac{1}{2\pi} \int_{0}^{\infty} {\rm e}^{j\omega t'} \cdot l(t') \int_{0}^{\infty} {\rm e}^{-j\omega \sigma'} \cdot l(\sigma')  \int_{-\infty}^{\infty} {\rm e}^{-j\omega(\tau-\sigma'+t')} \cdot r_u(\tau - \sigma' + t') \ d\tau d\sigma' dt' \label{App_Eq2}\\
	& = L(-j\omega) \cdot L(j\omega) \cdot \phi_u(\omega).  \nonumber
	\end{align}
Substituting~(\ref{App_Eq1})-(\ref{App_Eq2}) into the sensitivity-like function $T_{uw}(\omega)$ defined in~\eqref{SF2}, we can rewrite the sensitivity-like function as ${T}_{uw}(\omega) = [{\phi_u(\omega)} / {\phi_w(\omega)}]^{1/2} = \{\phi_u(\omega) / [\phi_u(\omega) + \phi_{uv}(\omega) + \phi_{vu}(\omega) + \phi_v(\omega)]\}^{1/2} = \{ \phi_u(\omega) / [(1+L(-j\omega)) \cdot (1+L(j\omega)) \cdot \phi_u(\omega)] \}^{1/2}$. When $\phi_u(\omega) \not\equiv 0$, with $L(s) = G(s) \cdot C(s)$, we have ${T}_{uw}(\omega)  = \sqrt{T_{uw}(-s) \cdot T_{uw}(s)}$. Since $T_{uw}(-s) = \bar{T}_{uw}(s)$, where $\bar{T}_{uw}(s)$ is the complex conjugate of ${T}_{uw}(s)$, the equality~\eqref{lem1a} in~\hyperref[lem1]{Lemma~1} can be retrieved from $(2\pi)^{-1} \cdot \int_{-\infty}^{\infty} \log {T}_{uw}(\omega) \ d\omega = (4\pi)^{-1} \int_{-\infty}^{\infty} \log \left[\bar{T}_{uw}(j\omega) \cdot {T}_{uw}(j\omega)\right] d \omega = (4\pi)^{-1} \int_{-\infty}^{\infty} \log \left|T_{uw}(j\omega)\right|^2 d \omega  = \break  (2\pi)^{-1} \cdot \int_{-\infty}^{\infty} \log | T_{uw}(j\omega) | d\omega$.

 Since $y(t) = g*u(t) = \int_{0}^{\infty} g(\theta) \cdot u(t - \theta) d\theta$, the auto-covariance of signal $y$ satisfies $r_y(\sigma, t) = \textrm{Cov}[\int_{0}^{\infty}g(\sigma') \cdot u(\sigma - \sigma')d\sigma', \int_{0}^{\infty}g(t') \cdot u(t-t')dt'] = \int_{0}^{\infty}\int_{0}^{\infty} g(\sigma') \cdot g(t') \cdot \textrm{Cov}[u(\sigma - \sigma'), u(t - t')] \allowbreak d\sigma' d t' = \int_{0}^{\infty}\int_{0}^{\infty} g(\sigma') \cdot g(t') \cdot r_u(\tau - \sigma' + t') d\sigma' dt' = r_y(\tau)$. Hence, the PSD of the stationary signal $y$ is
\begin{align}\label{App_Eq3}
	\phi_y(\omega) & = \dfrac{1}{2\pi}\int_{0}^{\infty}r_y(\tau) \cdot {\rm e}^{-j\omega\tau} d\tau  = \dfrac{1}{2\pi} \int_{-\infty}^{\infty} \int_{0}^{\infty}\int_{0}^{\infty} g(\sigma') \cdot g(t') \cdot r_u(\tau - \sigma' + t') d\sigma' dt'  \cdot {\rm e}^{-j\omega \tau} d\tau \nonumber\\
	& = \dfrac{1}{2\pi}\int_{0}^{\infty} \mathrm{e}^{j\omega t'} \cdot g(t') \int_{0}^{\infty} \mathrm{e}^{-j\omega\sigma'} \cdot g(\sigma') \int_{-\infty}^{\infty} \mathrm{e}^{-j\omega(\tau - \sigma' + t')}\cdot r_u(\tau - \sigma' + t') d\tau d\sigma' dt'\\
	& = G(-j\omega)\cdot G(j\omega) \cdot \phi_u(\omega). \nonumber
\end{align}
\noindent Substituting~(\ref{App_Eq1})-(\ref{App_Eq3}) into the load disturbance sensitivity-like function $T_{yw}(\omega)$ defined in~\eqref{SF2}, we can rewrite the load disturbance sensitivity-like function as follows ${T}_{yw}(\omega) = [{\phi_y(\omega)} / {\phi_w(\omega)}]^{1/2} = \{ \phi_y(\omega)  / [\phi_u(\omega) + \phi_{uv}(\omega) + \phi_{vu}(\omega) + \phi_v(\omega)] \}^{1/2} = \{ [G(-j\omega) \cdot G(j\omega) \cdot \phi_u(\omega)]  / [(1+L(-j\omega))\cdot(1 + L(j\omega)) \cdot \phi_u(\omega)]  \}^{1/2}$. When $\phi_u(\omega) \not\equiv 0$, it follows that ${T}_{yw}(\omega) = \sqrt{T_{yw}(-s) \cdot T_{yw}(s)}$. Since $T_{yw}(-j\omega) = \bar{T}_{yw}(j\omega)$, where $\bar{T}_{yw}(j\omega)$ is the complex conjugate of ${T}_{yw}(j\omega)$, the equality~\eqref{lem1b} in~\hyperref[lem1]{Lemma~1} can be retrieved from $(2\pi)^{-1} \int_{-\infty}^{\infty} \log {T}_{yw}(\omega) {d\omega}  = (4\pi)^{-1} \int_{-\infty}^{\infty} \log [\bar{T}_{yw}(j\omega) \cdot \break T_{yw}(j\omega)] {d\omega} =  (4\pi)^{-1}  \int_{-\infty}^{\infty} \log |T_{yw}(j\omega)|^2 {d\omega} =  (2\pi)^{-1}  \int_{-\infty}^{\infty} \log |T_{yw}(j\omega)| {d\omega}$. The steps for deriving~\eqref{lem1c} and~\eqref{lem1d} are similar to the preceding derivations; hence, only abbreviated steps are given.

Next, we consider the scenario in the presence of measurement noise $d(t)$. When both the plant model $\mathcal{P}$ and the control mapping $\mathcal{K}$ are linear, we have $y(t)= (c*(g*e))(t)   =   \int_{0}^{\infty} c(\theta) [ \int_{0}^{\infty} g(\eta)   \cdot \allowbreak  e(t - \theta - \eta) d\eta ]  d \theta = \int_{0}^{\infty}l(\sigma') \cdot  e(t - \sigma') d\sigma'$. Since $d(t) = e(t) + y(t)$, similar to the result presented in~\eqref{App_Eq1}, we have $\phi_d(\omega) = \phi_e(\omega) + \phi_{ey}(\omega) + \phi_{ye}(\omega) + \phi_{y}(\omega)$. When disturbance $d(t)$ is zero-mean stationary, with $y(t) = \int_{0}^{\infty}l(\sigma') \cdot e(t - \sigma') d\sigma'$, $u(t) =c * e(t) = \int_{0}^{\infty} c(\theta) \cdot e(t - \theta) d\theta $, and $\tau = \sigma - t$, the covariances $r_e, r_{ey}, r_{ye}$, $r_y$, and $r_u$ satisfy $r_{e}(\sigma, t) = r_e(\sigma - t) = r_e(\tau)$, $r_{ey}(\sigma, t) = \int_{0}^{\infty} l(\sigma') \cdot r_e(\sigma' + \tau) d\sigma' = r_{ey}(\tau)$, $r_{ye}(\sigma, t) = \int_{0}^{\infty} l(\sigma') \cdot r_e(-\sigma' + \tau) d\sigma' = r_{ye}(\tau)$, $r_y(\sigma, t) = \int_{0}^{\infty}\int_{0}^{\infty}l(\sigma') \cdot l(t') \cdot r_e(\tau - \sigma' + t') d\sigma' dt' = r_y(\tau)$, and $r_u(\sigma, t) = \int_{0}^{\infty}\int_{0}^{\infty} c(\sigma') \cdot c(t') \cdot r_e(t - \sigma' + t') d\sigma' dt' = r_u(\tau)$.
Hence, similar to~\eqref{App_Eq2}, the PSDs $\phi_{u}(\omega), \phi_{y}(\omega), \phi_{ey}(\omega)$, and $\phi_{ye}(\omega)$ satisfy $\phi_{u}(\omega)  = C(-j\omega) \cdot C(j\omega) \cdot \phi_{e}(\omega)$, $\phi_{y}(\omega)  = L(-j\omega) \cdot L(j\omega) \cdot \phi_e(\omega)$, $\phi_{ey}(\omega) = L(-j\omega) \cdot \phi_e(\omega)$, and $\phi_{ye}(\omega) = L(j\omega) \cdot \phi_e(\omega)$. Then when $\phi_e(\omega) \not\equiv 0$, we have the following relationship $T_{ud}(\omega) = [\phi_u(\omega) / \phi_d(\omega)]^{1/2} =\{ [C(-j\omega) \cdot C(j\omega) \cdot \phi_e(\omega)] / [(1+L(-j\omega))\cdot(1 + L(j\omega)) \cdot \phi_e(\omega)] \}^{1/2} = |T_{ud}(j\omega)|$, and $T_{yd}(\omega) = [\phi_y(\omega) / \phi_d(\omega)]^{1/2} = \{ [L(-j\omega) \cdot L(j\omega) \cdot \phi_e(\omega)] / [(1+L(-j\omega))\cdot(1 + L(j\omega)) \cdot \phi_e(\omega)] \}^{1/2} = \break |T_{yd}(j\omega)|$, which readily imply~\eqref{lem1c} and~\eqref{lem1d} in~\hyperref[lem1]{Lemma~1}. This completes the proof. \hfill \qed

\bibliographystyle{unsrt}
\bibliography{ref}

\end{document}